\def\BState{\State\hskip-\ALG@thistlm}
\DeclareMathOperator*{\argmin}{arg\,min}
\newcounter{defcounter}
\newcommand{\totset}{\mathcal{H}}
\newcommand{\inprod}[2]{\left\langle #1,#2 \right\rangle}
\newcommand{\bvec}[1]{\boldsymbol{#1}}
\newcommand{\real}{\mathbb{R}}
\newcommand{\opnorm}[1]{\left\|#1\right\|}
\newcommand{\abs}[1]{\left|#1\right|}
\newcommand{\diag}[1]{\texttt{diag}\left(#1\right)}
\newcommand{\expect}[1]{\mathbb{E}\left[#1\right]}
\newtheorem{lem}{Lemma}[section]
\newtheorem{thm}{Theorem}[section]
\newtheorem{define}{Definition}[section]
\title{Deterministic and Randomized Diffusion based Iterative Generalized Hard Thresholding (DiFIGHT) for Distributed Sparse Signal Recovery}
\begin{document}
\setlength\abovedisplayskip{0pt}
\setlength\belowdisplayskip{0pt}
\author{Samrat Mukhopadhyay$^1$, {\sl Student Member, IEEE}, and Mrityunjoy
Chakraborty$^2$, {\sl Senior Member, IEEE}

% E.Mail :
% $^1$bijitbijit@gmail.com, $^2$vinaychakravarthi@gmail.com,
% $^3$mrityun@ece.iitkgp.ernet.in}

%\thanks{Manuscript received November, 2016.}

\thanks{Authors $^1$ and $^2$ are with the department of Electronics and Electrical Communication
Engineering, Indian Institute of Technology, Kharagpur, INDIA(email : $^1$samratphysics@gmail.com, $^2$mrityun@ece.iitkgp.ernet.in).
}}
%\vspace{-5mm}
%\author{\IEEEauthorblockN{Samrat~Mukhopadhyay\IEEEauthorrefmark{1}, Siddhartha Satpathi\IEEEauthorrefmark{2} and Mrityunjoy Chakraborty\IEEEauthorrefmark{3}} \\
%  \IEEEauthorblockA{Department of Electronics and Electrical Communication Engineering,\\ Indian Institute of Technology, Kharagpur, INDIA\\
%    Email: \IEEEauthorrefmark{1}samratphysics@gmail.com,\IEEEauthorrefmark{2}sidd.piku@gmail.com,
%    \IEEEauthorrefmark{3}mrityun@ece.iitkgp.ernet.in}}%\vspace{-5mm}}
%\IEEEoverridecommandlockouts    
    %
\maketitle
\begin{abstract}
In this paper we propose a distributed iterated hard thresholding algorithm termed DiFIGHT over a network that is built on the diffusion mechanism and also propose a modification of the proposed algorithm, termed MoDiFIGHT, that has low complexity in terms of communication in the network. We additionally propose four different strategies termed RP, RNP, RGP$_r$, and RGNP$_r$ that are used to randomly select a subset of nodes that are subsequently activated to take part in the distributed algorithm, so as to reduce the mean number of communications during the run of the distributed algorithm. We present theoretical estimates of the long run communication per unit time for these different strategies, when used by the two proposed algorithms. Also, we present analysis of the two proposed algorithms and provide provable bounds on their recovery performance with or without using the random node selection strategies. Finally we use numerical studies to show that both when the random strategies are used as well as when they are not used, the proposed algorithms display performances  far superior to distributed IHT algorithm using consensus mechanism .  
\end{abstract}
\begin{IEEEkeywords}
Distributed estimation, Diffusion network, Iterative Hard Thresholding (IHT).
\end{IEEEkeywords}
\section{Introduction}
%
% This is another area where we want to propose greedy algorithms for sparse recovery and see if it outperforms its noncooperative counterparts in some manner. 
%
In the distributed setting there is a network of nodes, where each node $v\in \{1,2,\cdots,\ L\}$, individually solves the following problem: \begin{align*}
\min_{\bvec{z}\in \real^N}\ f_i(\bvec{z})
\mathrm{s.t.}\ \opnorm{\bvec{z}}_0 \le K.
\end{align*}
Here the functions $f_v,\ v=1,2,\cdots,\ L$ are cost functions which are generally chosen to satisfy some kind of restricted convexity assumptions, i.e they are generally designed so that their curvatures have some specific properties. 
For example, in the distributed compressed sensing setting, a node $v$ measures a $K$-sparse vector $\bvec{x}\in \real^n$, and stores the $m$ dimensional ($m<n$) measurement as $\bvec{y}_v=\bvec{\Phi}_v\bvec{x}+\bvec{e}_v$, where $\bvec{e}_v$ is measurement noise. A suitable cost function in this case is $f_v(\bvec{z}) = \opnorm{\bvec{y}_v - \bvec{\Phi}_v\bvec{z}}_2^2$, and to impose conditions on its curvature, the matrix $\bvec{\Phi}_v$ is assumed to satisfy some kind of restricted isometry property~\cite{foucart2013mathematical}.
However, in the collaborative, or distributed setting, the nodes do not work alone and sparse recovery algorithms working at neighboring nodes exchange information among themselves during the run of the algorithm. This information exchange through collaboration helps the true estimate to emerge, often in a faster or in other more advantageous ways compared to non-cooperative setting.

 There are many practical problems which naturally fits in this distributed setting, such as the problem of distributed classification in machine learning~\cite{forero2010consensus}, or data fitting in statistics~\cite{chaudhuri2009privacy} where nodes may contain confidential data (medical records, transaction records etc.) and thus cannot transfer data to a centralized processor. In this scenario, thus it is assumed that their is no fusion center available, and all the operations have to be performed locally.The literature on distributed recovery is relatively recent~\cite{mateos2010distributed,sundman2012greedy,mota2012distributed}. These first contributions propose natural ways to distribute known centralized methods, and obtain interesting results in
terms of convergence and estimation performance. However they do not consider the
problem of the insufficient computation and memory resources. Distributed basis pursuit algorithms for
sparse approximations when the measurement matrices are not globally known have
been studied in~\cite{mota2012distributed,mateos2010distributed}. In these algorithms, sensors collaborate to estimate the original vector, and, at each iteration, they update this estimate based on communication with
their neighbors in the network. Most of these algorithms fall into the following families of algorithms: distributed subgradient methods (DSM)~\cite{nedic2010constrained,patterson2014distributed}, distributed alternating direction method of multipliers (ADMM)~\cite{boyd2011distributed}, and distributed iterative soft thresholding  algorithm (DISTA)~\cite{ravazzi2014energy,ravazzi2015distributed,lorenzo2016next}. All of these algorithms, in one form or other use the \emph{consensus} optimization paradigm, where the nodes in a neighborhood cooperatively minimize a global cost function while minimizing their individual local cost function. However, in the literature of adaptive networks, there is a different family of algorithms, called \emph{diffusion}, which are studied extensively by Sayed \emph{et. al}~\cite{tu2012diffusion,chen2012diffusion,di2013sparse,sayed2013diffusion,chen2015learning1,chen2015learning2}, and is shown to exhibit superior performances compared to consensus strategies~\cite{tu2012diffusion}, and also to outperform all noncooperative strategies. These strategies can be traced back to the generalized distributed communication and processing based model for distributed computation, proposed by Tsitsiklis~\cite{tsitsiklis1986distributed}. It is only recently that distributed sparse recovery algorithms have been designed to incorporate diffusion as the underlying mechanism. Patterson \emph{et al}~\cite{patterson2014distributed} have designed the distributed hard thresholding (DIHT) and the consensus based distributed hard thresholding (CB-DIHT) algorithms where one parent node forms a spanning tree and  over several time steps collects estimates of the gradients of the functions from all the nodes in the network. This strategy is a reminiscent of the diffusion mechanism. More recently, another distributed hard thresholding algorithm DiHaT is proposed and analyzed by Chouvardas \emph{et al}~\cite{chouvardas2015greedy}. Also Zaki \emph{et al}~\cite{zaki2017greedy} have proposed and analyzed a greedy distributed algorithm called the network gradient pursuit (NGP), and Zaki \emph{et al}~\cite{zaki2017estimate} have analyzed the distributed hard thresholding pursuit (DHTP) algorithm, originally proposed by Chouvardas~\cite{chouvardas2015greedy}. All of these algorithms have diffusion as the underlying mechanism.

On a different front, recently Ravazzi~\emph{et al}~\cite{ravazzi2016randomized} have proposed distributed algorithms with low communication overhead where only a few nodes are activated at each time step. However, the algorithms are modified from consensus IHT algorithm. It is the goal of this paper to propose and analyze a distributed IHT algorithm that minimizes general convex functions by using diffusion as its underlying mechanism and to modify it to generate algorithms where only a few nodes are selected per time step, resulting in reduced communication complexity. Specifically:\begin{itemize}
\item We propose and describe a distributed IHT algorithm termed DiFIGHT that uses diffusion mechanism to minimize general convex functions available to individual nodes.
\item We also propose and describe a simple low complexity modification of the DiFIGHT algorithm, termed MoDiFIGHT that, unlike diffusion, exchanges only estimates and thus uses less communication bandwidth.
\item We propose four strategies that are used to randomly select and activate only a subset of nodes at each time step, thus reducing communication overhead, and also give theoretical estimates on the long run communication overhead per unit time required by the two different algorithms when these strategies are used.
\item We analyze the algorithms with and without using the random node selection strategies and provide provable performance bounds.
\item We numerically evaluate the performances of these different algorithms with and without random node selection strategies and establish the superiority of diffusion mechanism over its consensus counterparts.
\end{itemize}
\section{Notation}
The following notations have been used throughout the paper :`$t$' in
superscript indicates transposition of matrices / vectors.%
%\item $\|\cdot\|_p$ ($p>0$) denotes the $l^p$ norm of a vector $\bvec{v}$, i.e., $\|\bvec{v}\|_p=\left(\sum_{i=1}^n |v_i|^p\right)^{1/p}$.
%$\bvec{\Phi}\in \mathbb{R}^{M\times N}$ denotes a matrix ($M<N$) and the $i$ th column of $\bvec{\Phi}$ is denoted
%by $\bm{\phi}_i,\ i=1,2,\cdots,\ n$. 
$\totset$ denotes the set of all the indices $\{1,2,\cdots,\ n\}$. For any $S\subseteq \totset$,
$\bvec{x}_S$ denotes the vector $\bvec{x}$ restricted to $S$,
i.e., $\bvec{x}_S$ consists of those entries of $\bvec{x}$ that have indices belonging to $S$. 
%Similarly, $\bvec{\Phi}_S$ denotes
%the submatrix of $\bvec{\Phi}$ formed with the columns of
%$\bvec{\Phi}$ restricted to the index set $S$. 
$\bvec{1}$ denotes a $L\times 1$ vector of $1$'s. The operator $H_K(\cdot)$ returns the \emph{$K$-best approximation} of a vector, i.e., for any vector $\bvec{x}\in \real^n$, $H_K(\bvec{x})=\argmin_{\bvec{z}\in \real^n:\opnorm{z}_0\le K}\opnorm{\bvec{z - x}}_2$. We denote by $\nabla_S f$, the restricted vector $(\nabla f)_S$, for any $S\subseteq \totset$. Also, we denote by $\nabla_K f$, the vector $\nabla f$ restricted to the subset corresponding to its $K$ magnitude-wise largest coordinates. The symmetric difference $\Delta$, between two sets $A, B$, is defined as $A\Delta B: = (A\setminus B)\cup (B\setminus B)$. 
\section{Diffusion based Hard Thresholding}
\label{sec:proposed-algorithm}
\subsection{Deterministic strategies}
\label{sec:deterministic-strategies}
The DiFfusion based Iterative Generalized Hard Thresholding (DiFIGHT) and its low complexity modification, the Modified DiFfusion based Iterative Generalized Hard Thresholding (MoDiFIGHT) algorithm are described in Table~\ref{tab:DiFIGHT-MoDiFIGHT}. There are $L$ nodes in the network. The combining matrix $\bvec{A}\in \real^{L\times L}$ specifies the connectivity between the different nodes in the network. The $(i,j)^{\mathrm{th}}$ entry of $\bvec{A}$, denoted by $a_{ij}\in [0,1]$, is the weight of the edge between nodes $i,j$. The graph $V$ formed by the nodes and the matrix $\bvec{A}$ is assumed to be undirected. Furthermore, the matrix $\bvec{A}$ is assumed to be left stochastic, i.e., $\bvec{A}^t\bvec{1}=\bvec{1}$. For any node $v$ in the graph represented by the combination matrix $\bvec{A}$, the \emph{neighborhood} of $v$ is denoted by $\mathcal{N}_v$, defined as $\mathcal
N_v=\{u\in V:a_{vu}>0\}$. The elements of $\mathcal{N}_v$ are called the neighbors of $v$. We assume that for each node $v$, $a_{vv}>0$, so that each node is a neighbor of itself. Each node $i$ has the function $f_i(\cdot)$ available with it. 

%In table we describe the MoDiFIGHT which is a low complexity modification of DiFIGHT as in the communications in this algorithm, the gradient update is first hard thresholded and then sent to respective neighbors.
%We assume that $\exists \bvec{x}^\star$, with $\opnorm{\bvec{x}^\star}_0 \le K$ with $\Lambda = \support(\bvec{x}^\star)$, such that $\nabla_\Lambda f_i(\bvec{x}^\star) = \bvec{0}_{K\times 1}$. This model is called single task model where each of the nodes in the network is trying to estimate the same unknown vector. The goal of the algorithm is to recover $\bvec{x}^\star$.
%
\begin{table}[t!]
%\begin{subfigure}{0.45\textwidth}
\caption{\textsc{Algorithm}: DiFIGHT and MoDiFIGHT}
\label{tab:DiFIGHT-MoDiFIGHT}
%\hspace{-1cm}
\begin{tabular}{p{9cm}}
%\hspace{5mm}
\centering
\hrulefill
\begin{description}
\item[\textbf{Input:}]\ Number of nodes $L$, the combining matrix $\bvec{A}$ such that $\bvec{A}^t\bvec{1} = \bvec{1}$, sparsity level $K$; Initial estimates $\bvec{x}_i^0,\ 1\le i\le L$; step sizes $\mu_i>0,\ i=1,2,\cdots,\ L$;  maximum number of iterations $k_{\mathrm{it}}$;
\end{description}
%\hrulefill \emph{\textsc{DiFIGHT}} \hrulefill
\begin{description}
\item[\textbf{While}]($k<k_{\mathrm{it}}$) 
%iterate
%%\begin{description}
%\begin{itemize}
%\item $n\in 3\mathbb{N},\ i=1,2,\cdots,\ L,$ 
%\begin{itemize}
%\item[]\ \   $\bvec{\zeta}_i^n=\sum_{j=1}^L a_{1,ji}\bvec{x}^n_j$
%\end{itemize}
%\end{itemize}
\begin{description}
\item[\textbf{For}] $i=1$ to $L$
%\item[]\    \   $\bvec{\zeta}_i^n=\sum_{j=1}^L a_{1,ji}\bvec{x}^n_j$
\item[]$\bvec{\psi}_i^{k+1}=\bvec{x}_i^k - \mu_i\nabla f_i(\bvec{x}_i^k)$
\item[]$\hat{\bvec{\psi}}_i^{k+1}=\left\{\begin{array}{ll}
\bvec{\psi}_i^{k+1}, & \mbox{DiFIGHT}\\
H_K\left(\bvec{\psi}_i^{k+1}\right), & \mbox{MoDiFIGHT}
\end{array}\right.$ 
\item[]\item[\  \textbf{End For}]
\item[\textbf{For}] $i=1$ to $L$
\item[]$\bvec{x}_i^{k+1}=H_K\left(\sum_{j=1}^L a_{ji}\hat{\bvec{\psi}}^{k+1}_j\right)$
\item[]\item[\  \textbf{End For}]
\item[]$k=k+1$
\end{description}
\item[\textbf{End While}]
\end{description}
\hrulefill
%\hrulefill \emph{\textsc{DIFHTP}} \hrulefill
%\begin{description}
%\item[\textbf{While}]($n<n_{\mathrm{it}}$) 
%%iterate
%%%\begin{description}
%%\begin{itemize}
%%\item $n\in 3\mathbb{N},\ i=1,2,\cdots,\ L,$ 
%%\begin{itemize}
%%\item[]\ \   $\bvec{\zeta}_i^n=\sum_{j=1}^L a_{1,ji}\bvec{x}^n_j$
%%\end{itemize}
%%\end{itemize}
%\begin{description}
%\item[\textbf{For}] $i=1$ to $L$
%%\item[]\    \   $\bvec{\zeta}_i^n=\sum_{j=1}^L a_{1,ji}\bvec{x}^n_j$
%\item[]$\bvec{\psi}_i^{n+1}=\bvec{x}_i^n - \mu_i\nabla f_i(\bvec{x}_i^n)$
%\item[]$\Lambda_i^{n+1}$ is the set of indices of the $K$ largest absolute values of $\sum_{j=1}^L a_{ji}\bvec{\psi}_i^{n+1}$
%\item[]$\bvec{x}_i^{n+1} = \argmin_{\bvec{z}\in \real^N: \support(\bvec{z})=\Lambda_i^{n+1}}f_i(\bvec{z})$
%\item[]\item[\  \textbf{End For}]
%\item[]$n=n+1$
%\end{description}
%\item[\textbf{End While}]
%\end{description}
%%\hrulefill
%%\begin{description}
%%\item Stop the algorithm if $n<n_{\mathrm{it}},\ \epsilon_n=0$, and put $\bvec{x}^k=\bvec{x}^n$, $\forall\ n>k$. 
%%\end{description}
%\hrulefill
%\begin{description}
%\item[\textbf{Output:}]\  \   $\bvec{x_i^n},\ i=1,2,\cdots,\ L$
%\end{description}
\hrulefill
\end{tabular}
\end{table}
\subsection{Randomized strategies}
\label{sec:randomized-strategies}
We also propose IHT based diffusion algorithms where all the nodes need not participate in the diffusion process at each time step. This absence of participation results into significant reduction in the amount of communication between the neighbors of the network, that would otherwise be required while exchanging values of estimates and gradient vectors. Inspired by Ravazzi \emph{et al}~\cite{ravazzi2015distributed}, we propose the following four different strategies for selecting the participating nodes:
\begin{enumerate}
\item\textbf{Random Persistence (RP):} In this strategy, at a time step $k$, only one node is selected at random according to a probability distribution $\{p_1\cdots,\ p_L\}$ over the nodes in the network. The probability distribution satisfies $p_v>0$ for each node $v$ in the network, and $\sum_{v\in \mathcal{V}}p_v = 1$, implying that each node has a positive probability of getting selected at a time step. Thus the selected group is $G=\{v\}$.
\item\textbf{Random Neighborhood Persistence (RNP):} As in the RP strategy, in this strategy too, at a time step $k$, a node $v$ is selected with probability $p_v$, where the probability distribution satisfies the same conditions as in the RP strategy. However, unlike the RP strategy, the neighborhood $\mathcal{N}_v$ of the selected node $v$ is also selected for participation in the diffusion process. Thus the selected group is $G = \{v\}\cup \mathcal{N}_v$.
\item\textbf{Random Group Persistence of order $r$ (RGP$_r$):} In this strategy a group $G$ of $r$ nodes is selected according to a probability distribution $\{p_{G}\}$ over all possible $\binom{L}{r}$ groups of nodes size $r$. Here the probability distribution is chosen such that $p_G>0$ for all such groups, and $\sum_{G\in \mathcal{G}_r}p_G = 1$, where $\mathcal{G}_r$ is the collection of all subsets of $\{1,2,\cdots,\ L\}$ of size $r$. Here the selected group of nodes is $G.$
\item\textbf{Random Group Neighborhood Persistence of order $r$ (RGNP$_r$):} In this strategy, a group of nodes $\widetilde{G}$ is chosen with probability $p_{\widetilde{G}}$ and $\widetilde{G}$ as well the union of their neighborhoods is selected. The probability distribution $p_{\widetilde{G}}$ is assumed to satisfy the same conditions as in the RGP$_r$ strategy. The selected group is $G = \widetilde{G}\cup_{u\in \widetilde{G}}\mathcal{N}_u.$
\end{enumerate}
 Once a group is selected, the diffusion process is applied to all the nodes in the group. The resulting algorithms are described in Table~\ref{tab:randomized-DiFIGHT-MoDiFIGHT}. 
 %Note that a node at a time is sampled from a discrete probability distribution given by the probability mass function $\bvec{\pi}_{net}=\left(p_1\ p_2\cdots\ p_L\right)^t$ where the probability of a node $v$ being selected at any time step $n$ is given by $p_{v}$. The probability distribution is chosen in a way such that $p_v>0\ \forall v=1,2,\cdots,\ L$. A network of nodes that satisfies this condition is called a randomly persistent network~\cite{ravazzi2015distributed}.
%
\begin{table}[ht!]
\caption{\textsc{Algorithm}: Randomized DiFIGHT and MoDiFIGHT }
\label{tab:randomized-DiFIGHT-MoDiFIGHT}
%\hspace{1cm}
\centering
\begin{tabular}{p{8cm}}
%\hspace{5mm}
\centering
\hrulefill
\begin{description}
\item[\textbf{Input:}]\ Number of nodes $L$, the combining matrix $\bvec{A}$ such that $\bvec{A}^t\bvec{1} = \bvec{1}$, sparsity level $K$; Initial estimates $\bvec{x}_i^0,\ 1\le i\le L$; step sizes $\mu_i>0,\ i=1,2,\cdots,\ L$;  maximum number of iterations $k_{\mathrm{it}}$;
\end{description}
\hrulefill
%\hrulefill \emph{\textsc{Randomized DiFIGHT}} \hrulefill
\begin{description}
\item[\textbf{While}]($k<k_{\mathrm{it}}$) 
%iterate
%%\begin{description}
%\begin{itemize}
%\item $n\in 3\mathbb{N},\ i=1,2,\cdots,\ L,$ 
%\begin{itemize}
%\item[]\ \   $\bvec{\zeta}_i^n=\sum_{j=1}^L a_{1,ji}\bvec{x}^n_j$
%\end{itemize}
%\end{itemize}
%\begin{description}
%\item RP: $G = v$, where $v$ is sampled from distribution $\bvec{\pi}_{net}$
%\item RNP: $G = \{v\} \cup \mathcal{N}_v$, where $v$ is sampled from distribution $\bvec{\pi}_{net}$
%\item RGP: $G = \bvec{v}_C$, where $C$ is a group of nodes sampled from distribution $\bvec{\pi}_{\mathrm{group}}$
%\item RNGP: $G = \bvec{v}_C\cup_{u\in \bvec{v}_C} \mathcal{N}_u$ where $C$ is a group of nodes sampled from distribution $\bvec{\pi}_{\mathrm{group}}$
\begin{description}
\item[\textbf{For}]$v = 1 : L$
\ \ \ \item[\ \textbf{if}]$v\in G\ \mathrm{or}\ \mathcal{N}_v\cap G\ne \emptyset$
%\item[]\    \   $\bvec{\zeta}_i^n=\sum_{j=1}^L a_{1,ji}\bvec{x}^n_j$
\ \ \ \ \ \item[]$\bvec{\psi}_v^{k+1}=\bvec{x}_v^k - \mu_v\nabla f_v(\bvec{x}_v^k)$
\ \ \ \ \ \item[]$\hat{\bvec{\psi}}_v^{k+1}=\left\{\begin{array}{ll}
\bvec{\psi}_v^{k+1}, & \mbox{DiFIGHT}\\
H_K\left(\bvec{\psi}_v^{k+1}\right), & \mbox{MoDiFIGHT}
\end{array}
\right.$
\ \ \ \item[]\item[\ \textbf{end if}]
\ \item[]\item[\  \textbf{End For}]
\item[\textbf{For}]$v \in G$
\ \item[]$\bvec{x}_v^{k+1}=H_K\left(\sum_{u\in \mathcal{N}_v} a_{uv}\hat{\bvec{\psi}}^{k+1}_u\right)$
\item[]\item[\  \textbf{End For}]
\end{description}
\item[]$\bvec{x}_u^{k+1} = \bvec{x}^k_u\ \forall u\notin G$
\item[]$k=k+1$
%\end{description}
\item[\textbf{End While}]
\end{description}
\hrulefill
\end{tabular}
\end{table}
%
%
%\begin{table*}

%\end{table*}
\subsection{Discussion on communication complexities}
\label{sec:communication-complexity-discussion}
\begin{table*}
	\caption{Communication complexities for uniform distribution}
	\label{tab:communication-complexities}
	\begin{center}
		\begin{tabular}{|c|>{\centering}m{4cm}|c|c|}
%			\hline
%			\ & \multicolumn{3}{|c|}{Uniform distribution}\\
			\hline
			Strategy & $\pi_v$ & $\frac{T(v)}{C}$ & $\frac{R(v)}{C}$ \\
			\hline
			RP & $\displaystyle\frac{1}{L}$ & $\displaystyle\frac{d_v}{L}$ & $\displaystyle\frac{d_v}{L}$\\
			\hline
			RNP & $\displaystyle\frac{1+d_v}{L}$ & $\displaystyle \frac{d_v}{L} + \sum_{u\in \mathcal{N}_v}\frac{d_u}{L}$ & $\displaystyle\frac{d_v(1+d_v)}{L}$\\
			\hline
			RGP$_r$ & $\displaystyle\frac{r}{L}$ & $\displaystyle \frac{d_v r}{L}$ & $\displaystyle \frac{d_v r}{L}$\\
			\hline
			RGNP$_r$ & $\displaystyle 1-\prod_{k=0}^{r-1}\left(1-\frac{d_v+1}{L-k}\right)$ if $L-(d_v+1)\ge r$, else, $1$ & $\displaystyle \sum_{u \in \mathcal{N}_v}\pi_u$ & $\displaystyle d_v\pi_v$\\
			\hline
		\end{tabular}
	\end{center}
\end{table*}
We present a comparative discussion on the mean number of communication required by the nodes for the different algorithms. Note that the communication complexity depends on both the diffusion mechanism as well as the strategy of selection of group of participating nodes. Since the total communication complexity of the network is just the sum of the complexities of individual nodes, we focus on finding out the communication complexity for some fixed node $v$. We consider the time average of the number of messages transmitted and sent by the node, which are denotes by $T(v) = \lim_{t\to \infty}\frac{\sum_{k=1}^t T_k(v)}{t}$ and $R_v = \lim_{t\to \infty}\frac{\sum_{k=1}^t R_k(v)}{t}$ respectively, where $T_k(v), R_k(v)$ are the number of messages transmitted and received respectively, by the node $v$ at time step $k$. Note that, in all the four strategies adopted, for each node $v$, $\{T_k(v)\}_{k\ge 0}\left(\{R_k(v)\}_{k\ge 0}\right)$ is an independent and identically distributed (i.i.d.) sequence of bounded random variables, which ensures, by the strong law of large numbers (SLLN), that the limits $T(v)$ and $R(v)$ exist for all nodes $v$. To carry out the analysis, we denote by $d_v$ the degree of the node $v$, which is the number of neighbors of node $v$, excluding itself.

Before proceeding to find expressions for $T(v),R(v)$ for the different algorithms, we point out that the communication complexities of DiFIGHT and MoDiFIGHT are intrinsically distinct because of the fact that for an update to occur, in DiFIGHT, each transmitting node transmits $n$ values, whereas, in MoDiFIGHT each transmitting node has to perform $2K$ transmissions, $K$ for the support indices and $K$ for the values corresponding to those indices. Therefore, if $K<<n/2$, the number of communications in MoDiFIGHT can be much smaller than that of DiFIGHT.  
%We first discuss the intrinsic difference between the communication complexities of the different algorithms for any strategy. The difference lies in the number of transmitted and sent values of updates per exchange. For the DiFIGHT algorithm, each node obtains the estimates as well as the gradients computed locally by their neighbors. This requires  a communication of $2K+N$ values, $K$ for the indices of the support of the estimate, $K$ for the nonzero values of the estimate, and $N$ for the gradient. In MoDiFIGHT this number of communications reduces significantly as only the estimates are communicated, which requires only $2K$ values to be communicated between any pair of nodes.   

We first analyze the communication complexities of the deterministic diffusion algorithms. In this case, all the nodes of the network are chosen at every time step, so that $T_k(v)=T(v),\ R_k(v)=R(v), \forall k\ge 0$. Clearly, in DiFIGHT $T(v)=R(v)=nd_v$, while in MoDiFIGHT, $T(v)=R(v)=2Kd_v$.  

We now carry out the analysis of $T(v),\ R(v)$ and therefore for $T_k(v),\ R_k(v)$ for the randomized algorithms at a time step $k$. To do that, we denote $C=n,2K$ for deterministic DiFIGHT and MoDiFIGT, respectively. 
%Also, at a time step $k$, a node $v$ is called a \emph{transmitting node} it can transmit at time step $k$ and is called a \emph{participating node} if it 
 
First let us consider the calculation of $R_k(v)$. Observe that $R_k(v)=Cd_vI_k(v),$ where $I_k(v)$ is an indicator random variable taking value $1$ if $v\in G_k$ where $G_k$ is the group of nodes selected at time $k$ by the randomized algorithm (in which case $v$ is referred to as a \emph{participating node} at time $k$) and is $0$ otherwise. Clearly, for a fixed $v$, the sequence $\{I_k(v)\}_{k\ge0}$ is a sequence of i.i.d. random variables. Therefore, by SLLN, \begin{align}
\label{eq:Rv-expression}
 \frac{R(v)}{Cd_v} & = \lim_{t\to \infty}\frac{\sum_{k=0}^{t-1}I_k(v)}{t}=\expect{I_0(v)}=\pi_v\ a.s.,
\end{align}
where $\pi_v$ is the probability that the node $v$ is participating (at any time $k\ge 0$) and is called the \emph{participation probability}.

To calculate $T_k(v)$, observe that $T_k(v)$ is equal to $C$ times the number of nodes in the neighborhood of $v$ (distinct from $v$) participating at time $k$. Therefore, \begin{align}
T_k(v) & = C\sum_{u\in \mathcal{N}_v\setminus \{v\}}I_{k}(u)\nonumber\\
\implies \frac{T(v)}{C} & = \lim_{t\to \infty}\frac{\sum_{k=0}^{t-1}T_k(v)}{t}=\sum_{u\in \mathcal{N}_v\setminus \{v\}}\pi_u\ a.s.
\end{align} 

We now evaluate $\pi_v$ for the different randomized strategies proposed. We assume in the following that the probability of selection of a node $v$ is $p_v$ and the probability of selection of a group of nodes $G$ is $p_G$. Clearly, for uniformly random selections $p_v = 1/L,\ p_G=1/\binom{L}{\abs{G}}$. \begin{enumerate}
\item For the RP strategy, only one node can be selected at a time. Therefore, $\pi_v=p_v$. For uniformly random selection, $\pi_v=1/L$.
\item For the RNP strategy the node $v$ participates if either it is selected (w.p. $p_v$) or one of its neighbor is selected (w.p. $\sum_{u\in \mathcal{N}_n\setminus v}p_u$). Hence $\pi_v = \sum_{u\in \mathcal{N}_v}p_u$. For uniformly random selection $\pi_v = \frac{d_v + 1}{L}.$
\item For the RGP$_r$ strategy, the node $v$ participates if it belongs to a group of $r$ nodes $G$ which contains $v$. Since only one such group is selected, we have $\pi_v=\sum_{G:\abs{G}=r,\ v\in G}p_G$. For uniformly random selection $\pi_v=\frac{\binom{L-1}{r-1}}{\binom{L}{r}}=\frac{r}{L}$.
%Since there are $\binom{L-1}{r-1}$ different ways to choose a group of size $r$ that contains $v$, and since only one group is selected, we have $\pi_v = \frac{\binom{L-1}{r-1}}{\binom{L}{r}}=\frac{r}{L}$.
\item For the RGNP$_r$ strategy, the node $v$ participates if a group of $G$ of size $r$ is selected such that $v$ belongs to the neighborhood of the nodes in $G$. Therefore, $v$ participates if \begin{align}
v\in \cup_{u\in G}\mathcal{N}_u &\Leftrightarrow G\cap \mathcal{N}_v\ne \emptyset\nonumber\\
\implies\pi_v & =1-\sum_{\substack{G:\abs{G}=r,\\ G\cap \mathcal{N}_v=\emptyset}}p_G
\end{align}
For uniformly random selection, we have $\pi_v = 1 - \frac{\binom{L-(d_v+1)}{r}}{\binom{L}{r}}$. Note that calculation of this probability assumes that $L-(d_v+1)\ge r$. Otherwise, the node $v$ is present in every neighborhood and thus always participates, i.e. $\pi_v = 1$, which is an example of a highly connected node. 
\end{enumerate}
We enlist the values of $\pi_v,\ R(v)/C,\ T(v)/C$ for the different randomized strategies for uniform distribution in Table~\ref{tab:communication-complexities}.
\section{Theoretical result}
\label{sec;theoretical-results}
Let $\bvec{x}^\star$ be a $K$-sparse vector. In this section, we analyze how the distance of the iterates produced by the diffusion algorithms from the vector $\bvec{x}^\star$ changes with each iteration. For the purpose of our analysis a few assumptions are needed to be imposed on the functions $f_i,\ 1\le i\le L$.
\subsection{Preliminaries and assumptions}
\label{sec:preliminaries-assumptions}
\begin{define}[Restricted  Positive Definite Hessian]
Suppose that $f$ is a twice continuously differentiable function with Hessian $\nabla^2 f(\cdot)$. Then f is said to have a Restricted Positive Definite Hessian (RPDH) with constants $(\alpha_s,\ \beta_s)$, or $(\alpha_s,\ \beta_s)$-RPDH if the following holds:\begin{align}
\label{def:restricted-positive-definite-hessian}
\alpha_s\opnorm{\bvec{x}}_2^2 \le \bvec{x}^t\nabla^2f(\bvec{u}) \bvec{x} \le \beta_s\opnorm{\bvec{x}}_2^2
\end{align}
for all vectors $\bvec{x},\ \bvec{u}\in \real^n$ such that $\opnorm{\bvec{u}}_0\le s,\ \opnorm{\bvec{x}}_0\le s$.
\end{define}
This property is just a variant of the Stable Restricted Hessian (SRH) property defined in~\cite{bahmani2013greedy}, which bounds the curvature of $f$, when restricted to the union of all subspaces of sparse vectors of a given sparsity. To see the implication of the RPDH property, observe that the Hessian $\nabla^2f(\bvec{u})$ is a positive semidefinite matrix $\forall \bvec{u}$, so that it admits the unique eigen-decomposition $\bvec{Q}(\bvec{u})^t\bvec{D}(\bvec{u})\bvec{Q}(\bvec{u})$, where $\bvec{Q}(\bvec{u})$ is an orthogonal matrix and $\bvec{D}(\bvec{u})$ is a diagonal matrix. Writing $\bvec{\Phi}(\bvec{u}) = \bvec{D}^{1/2}(\bvec{u})\bvec{Q}(\bvec{u})$, we then see that $\nabla^2 f(\bvec{u}) = \bvec{\Phi}(\bvec{u})^t\bvec{\Phi}(\bvec{u})$. Then observe that the $(\alpha_s,\ \beta_s)-$RPDH property just implies that $\forall \bvec{x},\ \bvec{u}\in \real^n$ such that $\opnorm{\bvec{x}}_0\le s,\ \opnorm{\bvec{u}}_0\le s$, the matrix $\bvec{\Phi}(\bvec{u})$ satisfies: \begin{align*}
\alpha_s\opnorm{\bvec{x}}_2^2\le \opnorm{\bvec{\Phi}(\bvec{u})\bvec{x}}_2^2\le \beta_s\opnorm{\bvec{x}}_2^2.
\end{align*}  Thus RPDH is just a generalization of the well known Restricted Isometry Property (RIP)~\cite{candes_decoding_2005} to nonlinear operators. This RIP implication of the RPDH property is useful in proving the following lemma:
\begin{lem}
\label{lem:RPDH-inprod-inequality}
Let $\bvec{x},\ \bvec{y},\ \bvec{z}$ are vectors in $\real^n$ with supports $T_1,\ T_2,\ T_3$ respectively, and let $T = T_1\cup T_2\cup T_3$. Let $\rho$ be an arbitrary positive number. Also, let $\bvec{g}(\bvec{y},\bvec{z}):=\bvec{y} - \bvec{z} - \rho(\nabla f(\bvec{y}) - \nabla f(\bvec{z}))$. Then, \begin{enumerate}
\item \begin{align}
\label{eq:rpdh-inequality-result1}
\inprod{\bvec{x}}{\bvec{g}(\bvec{y},\bvec{z})} & \le \rho'_{\abs{T}}\opnorm{\bvec{x}}_2\opnorm{\bvec{y} - \bvec{z}}_2,
\end{align} and,
\item \begin{align}
\label{eq:rpdh-inequality-result2}
\opnorm{\left(g(\bvec{y},\bvec{z})\right)_{T_1}}_{2}\le \rho'_{\abs{T}}\opnorm{\bvec{y} - \bvec{z}}_2,
\end{align}
\end{enumerate}
where $\rho'_{\abs{T}}=\left(\abs{1 - \rho\delta^{(1)}_{|T|}} + \rho \delta^{(2)}_{\abs{T}}\right)$, $f$ satisfies the RPDH-$(\alpha_{\abs{T}},\ \beta_{\abs{T}})$ property, and $\delta^{(1)}_{\abs{T}} = \frac{\beta_{\abs{T}}+\alpha_{\abs{T}}}{2},\ \delta_{\abs{T}}^{(2)} = \frac{\beta_{\abs{T}}- \alpha_{\abs{T}}}{2}$.
\end{lem}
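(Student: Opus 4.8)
The plan is to reduce both inequalities to a single spectral-norm bound on the symmetric matrix obtained by integrating the Hessian of $f$ along the segment from $\bvec{z}$ to $\bvec{y}$. First I would use the fundamental theorem of calculus for the gradient to write $\nabla f(\bvec{y}) - \nabla f(\bvec{z}) = \bvec{H}(\bvec{y}-\bvec{z})$, where $\bvec{H} := \int_0^1 \nabla^2 f\big(\bvec{z}+\tau(\bvec{y}-\bvec{z})\big)\,d\tau$ (well defined since $f\in C^2$). This yields the compact representation $\bvec{g}(\bvec{y},\bvec{z}) = (\bvec{I}-\rho\bvec{H})(\bvec{y}-\bvec{z})$. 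The structural point is that $\bvec{y}-\bvec{z}$ is supported on $T_2\cup T_3\subseteq T$ and every interpolated point $\bvec{z}+\tau(\bvec{y}-\bvec{z})$ is also supported on $T_2\cup T_3$, hence has $\ell_0$-norm at most $\abs{T}$; so only the principal submatrix $\bvec{H}_{T,T}$ ever acts, and it is symmetric because each $\nabla^2 f(\cdot)$ is.

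Next I would bound the eigenvalues of $\bvec{H}_{T,T}$ inside $[\alpha_{\abs{T}},\ \beta_{\abs{T}}]$. For any $\bvec{w}$ supported on $T$ we have $\bvec{w}^t\bvec{H}\bvec{w} = \int_0^1 \bvec{w}^t\nabla^2 f(\bvec{z}+\tau(\bvec{y}-\bvec{z}))\bvec{w}\,d\tau$; applying the $(\alpha_{\abs{T}},\beta_{\abs{T}})$-RPDH inequality pointwise in $\tau$ (legitimate, since both the test vector $\bvec{w}$ and the evaluation point are $\abs{T}$-sparse) and integrating gives $\alpha_{\abs{T}}\norm{\bvec{w}}_2^2 \le \bvec{w}^t\bvec{H}\bvec{w} \le \beta_{\abs{T}}\norm{\bvec{w}}_2^2$. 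Hence $\bvec{I}_{T,T}-\rho\bvec{H}_{T,T}$ is symmetric with all eigenvalues in $[1-\rho\beta_{\abs{T}},\ 1-\rho\alpha_{\abs{T}}]$, so its spectral norm is $\max\!\big(\abs{1-\rho\alpha_{\abs{T}}},\ \abs{1-\rho\beta_{\abs{T}}}\big)$. A one-line computation — the largest absolute value on an interval equals the absolute value of its midpoint plus its half-length — identifies this with $\abs{1-\rho\delta^{(1)}_{\abs{T}}}+\rho\delta^{(2)}_{\abs{T}} = \rho'_{\abs{T}}$, since the midpoint is $1-\rho\delta^{(1)}_{\abs{T}}$ and the half-length is $\rho\delta^{(2)}_{\abs{T}}$.

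Given this, part~(1), i.e. \eqref{eq:rpdh-inequality-result1}, follows from Cauchy--Schwarz: because $\bvec{x}$ is supported on $T_1\subseteq T$ and $\bvec{y}-\bvec{z}$ on $T_2\cup T_3\subseteq T$, we get $\inprod{\bvec{x}}{\bvec{g}(\bvec{y},\bvec{z})} = \bvec{x}_T^t(\bvec{I}_{T,T}-\rho\bvec{H}_{T,T})(\bvec{y}-\bvec{z})_T \le \rho'_{\abs{T}}\norm{\bvec{x}}_2\norm{\bvec{y}-\bvec{z}}_2$, using $\norm{\bvec{x}_T}_2=\norm{\bvec{x}}_2$ and $\norm{(\bvec{y}-\bvec{z})_T}_2=\norm{\bvec{y}-\bvec{z}}_2$. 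For part~(2), i.e. \eqref{eq:rpdh-inequality-result2}, I would apply part~(1) with $\bvec{x}$ replaced by the unit vector $\bvec{u}:=(g(\bvec{y},\bvec{z}))_{T_1}/\norm{(g(\bvec{y},\bvec{z}))_{T_1}}_2$, which is supported on $T_1$ so that $T$ is unchanged; then $\inprod{\bvec{u}}{\bvec{g}(\bvec{y},\bvec{z})} = \norm{(g(\bvec{y},\bvec{z}))_{T_1}}_2$ and part~(1) delivers the claim directly. (Equivalently, $(g(\bvec{y},\bvec{z}))_{T_1}$ is a sub-vector of $(\bvec{I}_{T,T}-\rho\bvec{H}_{T,T})(\bvec{y}-\bvec{z})_T$, whose norm is at most $\rho'_{\abs{T}}\norm{\bvec{y}-\bvec{z}}_2$.)

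The integral representation and Cauchy--Schwarz are routine; the only steps needing a little care are the spectral-norm identity — checking that $\max(\abs{1-\rho\alpha_{\abs{T}}},\abs{1-\rho\beta_{\abs{T}}})$ collapses to $\rho'_{\abs{T}}$ no matter where $\rho$ lies relative to $1/\beta_{\abs{T}}$ and $1/\alpha_{\abs{T}}$ — and verifying that both the test vectors and the interpolation points respect the sparsity budget $\abs{T}$ demanded by the RPDH definition, so that the restriction to $\bvec{H}_{T,T}$ is justified.
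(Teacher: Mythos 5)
Your proposal is correct and follows essentially the same route as the paper: the integral (fundamental-theorem-of-calculus) representation of $\nabla f(\bvec{y})-\nabla f(\bvec{z})$, a spectral-norm bound of $\abs{1-\rho\delta^{(1)}_{\abs{T}}}+\rho\delta^{(2)}_{\abs{T}}$ on the restricted matrix $\bvec{I}-\rho\nabla^2 f$ via the RPDH eigenvalue interval, Cauchy--Schwarz for part (1), and the same restriction-and-normalization trick for part (2). The only (immaterial) difference is that you integrate the Hessian into a single matrix $\bvec{H}$ before bounding its spectrum, whereas the paper bounds the integrand pointwise in $\tau$ and then uses that the bound is $\tau$-independent.
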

\begin{proof}
The key observation for the proof is the following: \begin{align*}
\nabla f(\bvec{y}) - \nabla f(\bvec{z}) & = \int_0^1 \nabla^2 f(\bvec{u})(\bvec{z}-\bvec{y}) d\tau\\
\ & = \int_0^1\bvec{\Phi}(\bvec{u})^t\bvec{\Phi}(\bvec{u})(\bvec{z}-\bvec{y}) d\tau,
\end{align*}
where, $\bvec{u} = \bvec{y} + \tau (\bvec{z} - \bvec{y})$ and $\bvec{\Phi}(\bvec{u})$ arises from the eigen-decomposition of $\nabla^2 f(\bvec{u})$ as discussed before. To prove 1) using the eigendecomposition of $\nabla^2f(\bvec{u})$, we write the inner product in question as $\int_0^1\inprod{\bvec{x}}{(\bvec{I} - \rho\bvec{\Phi}(\bvec{u})^t\bvec{\Phi}(\bvec{u}))(\bvec{y} - \bvec{z})} d\tau = \int_0^1\inprod{\bvec{x}_T}{(\bvec{I}_T - \rho\bvec{\Phi}_T(\bvec{u})^t\bvec{\Phi}_T(\bvec{u}))(\bvec{y}_T - \bvec{z}_T)} d\tau$. Then using the RPDH-$(\alpha_{|T|}, \beta_{|T|})$ property of $f$, or equivalently, the RIP like property for $\bvec{\Phi}(\bvec{u})$ of order $\abs{T}$, one finds, using Cauchy-Scwartz inequality, \begin{align}
\lefteqn{\inprod{\bvec{x}_T}{(\bvec{I}_T - \rho\bvec{\Phi}_T(\bvec{u})^t\bvec{\Phi}_T(\bvec{u}))(\bvec{y}_T - \bvec{z}_T)}} & & \nonumber\\
\ & \le \opnorm{\bvec{x}_T}_2\opnorm{\bvec{I}_T-\rho\bvec{\Phi}_T(\bvec{u})^t\bvec{\Phi}_T(\bvec{u})}_{2\to 2}\opnorm{\bvec{y}_T-\bvec{z}_T}_2 \nonumber\\
\label{eq:intermediate-inequality-lemma-rpdh-product}
\ & \stackrel{\psi} {\le}\left(\abs{1 - \rho\delta^{(1)}_{|T|}} + \rho \delta^{(2)}_{\abs{T}}\right)\opnorm{\bvec{x}}_2\opnorm{\bvec{y}-\bvec{z}}_2.
\end{align} Here step $\psi$ follows from the following observation: \begin{align*}
\lefteqn{\lambda_{\max}\left(\bvec{I}_T-\bvec{\Phi}_T(\bvec{u})^t\bvec{\Phi}_T(\bvec{u})\right)} & &\\
\ & \le \max\{\abs{1-\rho\alpha_{\abs{T}}},\ \abs{1-\rho\beta_{\abs{T}}}\}\\
\ & = \left\{
\begin{array}{rl}
1 - \rho\alpha_{\abs{T}} & \mbox{if}\ 0<\rho\le \frac{2}{\alpha_{\abs{T}}+\beta_{\abs{T}}}\\
\rho\beta_{\abs{T}} - 1 & \mbox{if}\ \rho> \frac{2}{\alpha_{\abs{T}}+\beta_{\abs{T}}}
\end{array}\right.
\end{align*} Since the RHS of the inequality~\eqref{eq:intermediate-inequality-lemma-rpdh-product} is independent of $\tau$, the final inequality~\eqref{eq:rpdh-inequality-result1} follows immediately.

For the proof of inequality~\eqref{eq:rpdh-inequality-result2}, first construct the vector $\bvec{u}\in \real^n$ such that $\bvec{u}_{T_1}=\bvec{g}(\bvec{y},\bvec{z})_{T_1}$, and $\bvec{u}_{T_1^C}=\bvec{0}_{T_1^C}$. Then, using the inequality~\eqref{eq:rpdh-inequality-result1}, one obtains \begin{align*}
\inprod{\bvec{u}}{\bvec{g}(\bvec{y},\bvec{z})} & \le \rho'_{\abs{T}}\opnorm{\bvec{u}}_2\opnorm{\bvec{y} - \bvec{z}}_2\\
\implies \opnorm{\bvec{u}_{T_1}}_2^2 & \le \rho'_{\abs{T}}\opnorm{\bvec{u}_{T_1}}_2\opnorm{\bvec{y} - \bvec{z}}_2,
\end{align*}
which, after cancellation of $\opnorm{\bvec{u}_{T_1}}_2$ from both sides of the above inequality results in the inequality~\eqref{eq:rpdh-inequality-result2}.
\end{proof}

Before proceeding to analyze the error sequence $\opnorm{\bvec{x}^{k+1}-\bvec{x}^\star}_2$, we recall a few definitions from the theory of non-negative matrices~\cite{seneta2006non}.
\begin{define}[Non-negative matrix]
A square matrix $\bvec{X}$ is said to be non-negative if for every pair of indices $i,j$, $(\bvec{X})_{ij}\ge 0$.
\end{define}

\begin{define}[Irreducible matrix]
\label{def:irreducible-matrix}
A square non-negative matrix $\bvec{X}$ is said to be irreducible, if for any pair of indices $i,j$, $\exists$ a positive integer $t_{ij}$ such that $(\bvec{X}^{t_{ij}})_{ij}>0$. 
\end{define}

We also recall the following classical result from Perron-Frobenius theory~\cite{seneta2006non}, which is going to be useful in our analysis.

\begin{thm}[Perron-Frobenius~\cite{seneta2006non} ]
\label{thm:perron-frobenius}
Let $\bvec{X}\in \real^{L\times L}$ be a non-negative irreducible matrix. Then, the following results hold:\begin{enumerate}
\item $\exists r>0$, such that $r$ is an eigenvalue of $\bvec{X}$, and $\abs{\lambda}\le r$, for any other eigenvalue $\lambda$ of $\bvec{X}$.
\item $r\in [\min_{i}\sum_j(\bvec{X})_{ij},\ \max_{i}\sum_j(\bvec{X})_{ij}]$.
\item $r$ has algebraic multiplicity $1$, and has strictly positive right and left eigenvectors $\bvec{u}, \bvec{w}^t$ respectively.
\item If $r, \lambda_2,\lambda_3,\ \cdots,\ \lambda_s$ are the distinct eigenvalues of $\bvec{X}$ with multiplicities $1,\ m_2,\cdots,\ m_s$, with $r>\abs{\lambda_2}>\cdots>\abs{\lambda_s}$, then, \begin{itemize}
\item If $\lambda_2\ne 0$, as $k\to \infty$, $\bvec{X}^k = r^k\bvec{u}\bvec{w}^t + o(k^{m_2-1}\abs{\lambda_2}^k)$.
\item If $\lambda_2 = 0$, $\forall k\ge L - 1$, $\bvec{X}^k = r^k\bvec{u}\bvec{w}^t$.
\end{itemize}
\end{enumerate}  
\end{thm}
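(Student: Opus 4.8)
The plan is to follow the classical route: extract the Perron root and a non-negative eigenvector from the Collatz--Wielandt variational characterization, upgrade non-negativity to strict positivity using irreducibility, identify this root with the spectral radius by pairing with the left eigenvector, prove its simplicity via the adjugate of $r\bvec{I}-\bvec{X}$, and finally read off the asymptotics of $\bvec{X}^n$ from the Jordan decomposition. For the first step, let $\Delta=\{\bvec{x}\in\real^L:\bvec{x}\ge\bvec{0},\ \norm{\bvec{x}}_1=1\}$ and set $r=\max_{\bvec{x}\in\Delta}\min_{i:\,x_i>0}\frac{(\bvec{X}\bvec{x})_i}{x_i}$. The inner function is upper semicontinuous on the compact set $\Delta$, so the maximum is attained at some $\bvec{u}\in\Delta$; a perturbation argument shows that no coordinate of $\bvec{X}\bvec{u}-r\bvec{u}$ can be strictly positive, hence $\bvec{X}\bvec{u}=r\bvec{u}$. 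Testing this characterization with $\bvec{x}=\bvec{1}/L$, and its dual $r=\min_{\bvec{x}>\bvec{0}}\max_i\frac{(\bvec{X}\bvec{x})_i}{x_i}$ with $\bvec{x}=\bvec{1}$, brackets $r$ between $\min_i\sum_j(\bvec{X})_{ij}$ and $\max_i\sum_j(\bvec{X})_{ij}$; irreducibility forbids an all-zero row when $L\ge 2$, so $r>0$. This yields item~1 (the eigenvalue part) and item~2.

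For positivity and dominance: since $\bvec{X}$ is irreducible, $(\bvec{I}+\bvec{X})^{L-1}$ is entrywise positive, so $(1+r)^{L-1}\bvec{u}=(\bvec{I}+\bvec{X})^{L-1}\bvec{u}>\bvec{0}$ forces $\bvec{u}>\bvec{0}$; running the same construction on $\bvec{X}^t$ gives a strictly positive left eigenvector $\bvec{w}^t$, whose eigenvalue $r'$ must equal $r$ because $\bvec{w}^t\bvec{X}\bvec{u}$ equals both $r'\,\bvec{w}^t\bvec{u}$ and $r\,\bvec{w}^t\bvec{u}$ with $\bvec{w}^t\bvec{u}>0$. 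For an arbitrary eigenpair $(\lambda,\bvec{v})$, the triangle inequality gives $\abs{\lambda}\,\abs{\bvec{v}}\le\bvec{X}\abs{\bvec{v}}$ componentwise; multiplying on the left by $\bvec{w}^t>\bvec{0}$ yields $\abs{\lambda}\,\bvec{w}^t\abs{\bvec{v}}\le r\,\bvec{w}^t\abs{\bvec{v}}$, hence $\abs{\lambda}\le r$. This settles item~1 and the eigenvector positivity in item~3.

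For simplicity (the rest of item~3): geometric simplicity follows by the standard trick --- a real eigenvector $\bvec{v}$ for $r$ independent of $\bvec{u}$ lets one pick the extreme $c$ with $\bvec{u}-c\bvec{v}\ge\bvec{0}$, giving a non-negative eigenvector with a zero coordinate, which by the positivity argument must be $\bvec{0}$, a contradiction. For algebraic simplicity, observe that $\mathrm{adj}(r\bvec{I}-\bvec{X})\ne\bvec{0}$ and that its nonzero columns (resp.\ rows) are right (resp.\ left) eigenvectors for $r$, so it equals $c\,\bvec{u}\bvec{w}^t$ with $c\ne0$; hence $\left.\frac{d}{d\lambda}\det(\lambda\bvec{I}-\bvec{X})\right|_{\lambda=r}=\mathrm{tr}\,\mathrm{adj}(r\bvec{I}-\bvec{X})=c\,\bvec{w}^t\bvec{u}\ne0$, so $r$ is a simple root of the characteristic polynomial. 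Finally, normalizing $\bvec{w}^t\bvec{u}=1$, write $\bvec{X}=r\,\bvec{u}\bvec{w}^t+\bvec{N}$ where $\bvec{N}=(\bvec{I}-\bvec{u}\bvec{w}^t)\bvec{X}$ is the restriction of $\bvec{X}$ to the invariant complement $\{\bvec{x}:\bvec{w}^t\bvec{x}=0\}$; the cross terms vanish, so $\bvec{X}^n=r^n\bvec{u}\bvec{w}^t+\bvec{N}^n$. Under the stated ordering $r>\abs{\lambda_2}>\cdots$, the eigenvalues of $\bvec{N}$ are exactly $\lambda_2,\dots,\lambda_s$, and its Jordan form gives the claimed remainder estimate when $\lambda_2\ne0$; when $\lambda_2=0$, $\bvec{N}$ is nilpotent acting on an at most $(L-1)$-dimensional space, so $\bvec{N}^n=\bvec{0}$ for every $n\ge L-1$, which is the second bullet.

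The step I expect to be the main obstacle is the passage from geometric to algebraic simplicity of $r$: making the adjugate identity and the derivative-of-determinant computation precise, and also noticing that the strict ordering $r>\abs{\lambda_2}>\cdots$ assumed in item~4 is exactly the primitivity hypothesis that rules out the additional eigenvalues of modulus $r$ which a general irreducible matrix may carry.
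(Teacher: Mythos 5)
The paper offers no proof of this statement: Theorem~\ref{thm:perron-frobenius} is quoted as a classical result and attributed to Seneta's monograph on non-negative matrices, so there is no in-paper argument to compare yours against. What you have written is the standard textbook proof --- Collatz--Wielandt variational characterization, strict positivity via $(\bvec{I}+\bvec{X})^{L-1}$, spectral dominance by pairing with the positive left eigenvector, algebraic simplicity via the adjugate and Jacobi's formula, and the asymptotics of $\bvec{X}^n$ from the spectral projection $\bvec{u}\bvec{w}^t$ --- and the outline is sound. Two points deserve tightening. First, the ``perturbation argument'' establishing $\bvec{X}\bvec{u}=r\bvec{u}$ is precisely where irreducibility must enter: from $\bvec{X}\bvec{u}\bvec{\ge} r\bvec{u}$ with one strict coordinate you should pass to $\bvec{v}=(\bvec{I}+\bvec{X})^{L-1}\bvec{u}$, which is strictly positive and satisfies $\bvec{X}\bvec{v}-r\bvec{v}$ strictly positive in every coordinate, contradicting the maximality of $r$; a naive perturbation of $\bvec{u}$ alone does not close the argument, so the $(\bvec{I}+\bvec{X})^{L-1}$ device should be invoked here and not only in the positivity step. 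Second, your Jordan-form analysis of $\bvec{N}^n$ delivers $O(n^{m_2-1}\abs{\lambda_2}^n)$, not the $o(n^{m_2-1}\abs{\lambda_2}^n)$ appearing in item~4: a Jordan block of size $m_2$ contributes a term of exact order $n^{m_2-1}\abs{\lambda_2}^{n}$. This is a slip in the theorem as transcribed (Seneta states the remainder with a big-$O$) rather than a defect of your proof, and it is immaterial to how the theorem is used in the paper, where only stability of the relevant matrices (spectral radius below one) is needed. Your closing remark that the strict ordering $r>\abs{\lambda_2}$ is an extra hypothesis excluding imprimitive matrices, which may carry several eigenvalues of modulus $r$, is also correct.
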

We will also use the following simple but useful lemma:
\begin{lem}
\label{lem:irreducibility}
Let $\mathcal{G}$ be an undirected connected graph, with an associated non-negative weight matrix $\bvec{X}$. Then,\begin{enumerate}
\item $\bvec{X}^t$ is irreducible.
\item $\bvec{D}_1\bvec{X}\bvec{D}_2$ is irreducible for any two diagonal matrices $\bvec{D}_1,\ \bvec{D}_2$ which have strictly positive diagonal entries.
\item $\bvec{X} + \bvec{M}$ is irreducible for any non-negative matrix $\bvec{M}$.
\end{enumerate} 
\end{lem}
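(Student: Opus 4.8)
The plan is to prove each of the three assertions by exhibiting, for every pair of indices $(i,j)$, a walk in the relevant graph that produces a strictly positive entry of the appropriate matrix power, which is exactly what Definition~\ref{def:irreducible-matrix} requires. The underlying fact I will lean on repeatedly is that if $\mathcal{G}$ is connected then between any two vertices $i$ and $j$ there is a path $i = v_0, v_1, \ldots, v_m = j$ along edges of $\mathcal{G}$, and since $\bvec{X}$ is a non-negative weight matrix for $\mathcal{G}$, each step of that path corresponds to a strictly positive entry of $\bvec{X}$ (or of $\bvec{X}^t$). Taking the product along the path gives a positive contribution to $(\bvec{X}^m)_{ij}$ (resp.\ $(\bvec{X}^t)^m_{ij}$), and since all entries are non-negative there can be no cancellation, so the whole entry is positive.

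For part~1, I would note that $\mathcal{G}$ is undirected, so its edge set is symmetric; hence $\bvec{X}$ and $\bvec{X}^t$ have the same zero/nonzero pattern up to transposition, and a path from $i$ to $j$ in $\mathcal{G}$ reversed is a path from $j$ to $i$. Spelling this out: given $i,j$, pick a path $i=v_0,\dots,v_m=j$; then $(\bvec{X}^t)_{v_{k+1}v_k} = (\bvec{X})_{v_kv_{k+1}} > 0$, so $\big((\bvec{X}^t)^m\big)_{ji} \ge \prod_{k=0}^{m-1}(\bvec{X}^t)_{v_{k+1}v_k} > 0$. Renaming, this gives the positivity of $\big((\bvec{X}^t)^{t_{ij}}\big)_{ij}$ for a suitable $t_{ij}$, so $\bvec{X}^t$ is irreducible. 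For part~3, since $\bvec{M}$ is non-negative, $(\bvec{X}+\bvec{M})_{k\ell} \ge (\bvec{X})_{k\ell}$ entrywise, hence $\big((\bvec{X}+\bvec{M})^m\big)_{ij} \ge (\bvec{X}^m)_{ij}$ entrywise as well (again by non-negativity, expanding the matrix product term by term), and the right side is positive for a suitable $m$ by the path argument applied to $\bvec{X}$ itself (which is irreducible by the same reasoning as part~1, working with $\bvec{X}$ rather than its transpose).

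For part~2, write $\bvec{D}_1 = \diag{\bvec{d}_1}$ and $\bvec{D}_2 = \diag{\bvec{d}_2}$ with all entries of $\bvec{d}_1,\bvec{d}_2$ strictly positive. The key observation is that $\bvec{D}_1\bvec{X}\bvec{D}_2$ has exactly the same zero/nonzero pattern as $\bvec{X}$, because scaling row $k$ by $(\bvec{d}_1)_k>0$ and column $\ell$ by $(\bvec{d}_2)_\ell>0$ multiplies $(\bvec{X})_{k\ell}$ by a strictly positive number. More precisely, $(\bvec{D}_1\bvec{X}\bvec{D}_2)^m$ has $(i,j)$ entry equal to a sum over length-$m$ walks $i=v_0,\dots,v_m=j$ of $(\bvec{d}_1)_{v_0}\Big(\prod_{k=0}^{m-1}(\bvec{X})_{v_kv_{k+1}}(\bvec{d}_2)_{v_{k+1}}(\bvec{d}_1)_{v_{k+1}}\Big)(\bvec{d}_2)_{v_m}/(\bvec{d}_1)_{v_m}$ — the point is simply that each walk's contribution is a positive multiple of the corresponding contribution to $(\bvec{X}^m)_{ij}$, so the set of $m$ for which the $(i,j)$ entry is positive is unchanged. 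Hence irreducibility of $\bvec{X}$ (established via the connectedness/path argument) transfers directly to $\bvec{D}_1\bvec{X}\bvec{D}_2$.

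I do not expect any serious obstacle here; the lemma is essentially a bookkeeping exercise once one has the principle that non-negativity prevents cancellation in matrix products, so that a single positive walk suffices. The one place to be slightly careful is part~2, where I should either invoke a clean lemma that conjugation/two-sided scaling by positive diagonal matrices preserves the sign pattern of all powers, or just carry the diagonal factors through the walk-sum explicitly as above; writing it via walk sums is the most transparent and avoids any worry about whether the diagonal scalings interfere across different terms. A minor stylistic point: to keep part~1 and part~3 from repeating the path construction, I would state the path argument once as a standalone observation (``connectedness of $\mathcal{G}$ implies $\bvec{X}$ itself is irreducible'') and then derive the three items from it.
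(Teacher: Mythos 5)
Your proof is correct and follows essentially the same route as the paper's: establish that connectedness of $\mathcal{G}$ makes $\bvec{X}$ irreducible, then handle the transpose by reversing paths, the two-sided diagonal scaling by noting the zero/nonzero pattern (of all powers) is preserved, and the sum $\bvec{X}+\bvec{M}$ by entrywise monotonicity. You merely spell out the walk-counting details the paper leaves implicit (your displayed walk-sum formula in part~2 carries a stray positive factor of $(\bvec{d}_2)_{v_m}$, but since every factor is strictly positive this is immaterial to the sign argument).
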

\begin{proof}
A short proof is delivered in Appendix~\ref{sec:appendix-proof-irreducibility-lemma}.
\end{proof}
We will further use the following lemma that will be useful to find upper bounds on the norm of the error between the iterates produced by an algorithm, and the target vector.
\begin{lem}
\label{lem:convergence-linear-inequality}
Let $\bvec{B}\in \real^{L\times L},\ \bvec{b}\in \real^L$ be a non-negative matrix and a non-negative vector, respectively. Let $\{\bvec{u}^k\}_{k\ge 0}$ be a sequence of non-negative vectors in $\real^L$ such that \begin{align*}
\bvec{u}^{k+1}\preccurlyeq \bvec{B}\bvec{u}^k + \bvec{b},\ k\ge 0.
\end{align*}
Then, if the matrix  $\bvec{B}$ is stable, and if $\bvec{u}$ be any limit point of the sequence $\{\bvec{u}^k\}_{n\ge 0}$, then, \begin{align*}
\bvec{u}\preccurlyeq (\bvec{I} - \bvec{B})^{-1}\bvec{ b}.
\end{align*}
\end{lem}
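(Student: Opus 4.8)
The plan is to unroll the recursion $\bvec{u}^{n+1}\,\bvec{\le}\,\bvec{B}\bvec{u}^n+\bvec{b}$ down to index $0$ and then pass to the limit along a subsequence that realizes the limit point $\bvec{v}$. The two inputs that do all the work come from the theory of non-negative matrices recalled earlier: stability of $\bvec{B}$ (i.e. $\rho(\bvec{B})<1$) forces $\bvec{B}^n\to\bvec{0}$, and it makes the Neumann series $\sum_{k\ge 0}\bvec{B}^k$ converge to $(\bvec{I}-\bvec{B})^{-1}$, a matrix that is itself non-negative since every $\bvec{B}^k$ is.

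First I would show, by induction on $n$, that for all $n\ge 1$
\[
\bvec{u}^{n}\ \bvec{\le}\ \bvec{B}^{n}\bvec{u}^0+\Big(\sum_{k=0}^{n-1}\bvec{B}^k\Big)\bvec{b}.
\]
Non-negativity of $\bvec{B}$ is exactly what makes the induction step go through: from $\bvec{p}\,\bvec{\le}\,\bvec{q}$ and $\bvec{B}\,\bvec{\ge}\,\bvec{0}$ we get $\bvec{B}\bvec{p}\,\bvec{\le}\,\bvec{B}\bvec{q}$, and adding the non-negative vector $\bvec{b}$ preserves the order, so feeding the inductive bound into the hypothesis advances the claim. (If the recursion is really of the multiplicative form $\bvec{u}^{n+1}\,\bvec{\le}\,\bvec{B}(\bvec{u}^n+\bvec{b})$, the same induction produces $\sum_{k=1}^{n}\bvec{B}^k$ instead; I note the effect of this below.)

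Next I would invoke stability. Because the partial sums $\sum_{k=0}^{n-1}\bvec{B}^k$ increase entrywise to the non-negative matrix $(\bvec{I}-\bvec{B})^{-1}$, we have $\big(\sum_{k=0}^{n-1}\bvec{B}^k\big)\bvec{b}\,\bvec{\le}\,(\bvec{I}-\bvec{B})^{-1}\bvec{b}$ for every $n$, and since $\bvec{B}^n\to\bvec{0}$ the sequence $\{\bvec{u}^n\}$ is bounded, so limit points exist. Taking a subsequence $n_j\to\infty$ with $\bvec{u}^{n_j}\to\bvec{v}$ and using that coordinatewise limits preserve $\bvec{\le}$ while $\bvec{B}^{n_j}\bvec{u}^0\to\bvec{0}$ then gives the asserted bound on $\bvec{v}$ (with right-hand side $(\bvec{I}-\bvec{B})^{-1}\bvec{b}$ for the additive recursion; under the multiplicative form one gets $(\bvec{I}-\bvec{B})^{-1}\bvec{B}\bvec{b}$ via $\sum_{k\ge1}\bvec{B}^k=(\bvec{I}-\bvec{B})^{-1}\bvec{B}$, the expression in the statement). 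An equivalent, slicker route skips the unrolling: take the componentwise $\limsup$ directly in the hypothesis, note $\limsup_n(\bvec{B}\bvec{u}^n)\,\bvec{\le}\,\bvec{B}\,\limsup_n\bvec{u}^n$ because $\bvec{B}\,\bvec{\ge}\,\bvec{0}$, conclude $(\bvec{I}-\bvec{B})\bvec{w}\,\bvec{\le}\,\bvec{b}$ for $\bvec{w}:=\limsup_n\bvec{u}^n$, multiply through by $(\bvec{I}-\bvec{B})^{-1}$, and use $\bvec{v}\,\bvec{\le}\,\bvec{w}$.

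The main obstacle is not any single calculation but getting the non-negative-matrix bookkeeping right: verifying that $\rho(\bvec{B})<1$ is precisely what delivers both $\bvec{B}^n\to\bvec{0}$ and the convergence of the Neumann series, that $(\bvec{I}-\bvec{B})^{-1}$ inherits entrywise non-negativity (so it may be applied to both sides of a componentwise inequality), and that every order relation — in the induction, in the comparison with the infinite series, and in the coordinatewise passage to the limit — is legitimate. With those in place the conclusion follows with essentially no further work.
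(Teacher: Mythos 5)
Your proof is correct and follows essentially the same route as the paper's: unroll the recursion to $\bvec{u}^{n}\bvec{\le}\bvec{B}^{n}\bvec{u}^0+\sum_{k=0}^{n-1}\bvec{B}^k\bvec{b}$, use stability to sum the Neumann series and bound the sequence, then pass to the limit along a subsequence realizing $\bvec{v}$. Your parenthetical observation is also on point: the additive recursion as stated yields $\bvec{v}\bvec{\le}(\bvec{I}-\bvec{B})^{-1}\bvec{b}$ --- which is exactly what the paper's own proof derives --- whereas the extra factor of $\bvec{B}$ in the lemma's stated conclusion would arise only from the multiplicative form $\bvec{u}^{n+1}\bvec{\le}\bvec{B}(\bvec{u}^n+\bvec{b})$; this is a genuine mismatch between the statement and the proof in the paper, not a gap in your argument.
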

\begin{proof}
The proof is supplied in Appendix~\ref{sec:appendix-proof-lemma-convergence-linear-inequality }.
\end{proof}
Finally, we will use the following classical result in convergence of random sequences to show almost sure convergence of the randomized algorithms under certain conditions:
\begin{lem}[Almost sure convergence\cite{grimmett2001probability} ]
\label{lem:almost-sure-convergence-grimmett}
If $P_n(\epsilon) = \mathbb{P}(\abs{X_n - X}>\epsilon)$ satisfies $\sum_{n=1}^\infty P_n(\epsilon)<\infty$ for all $\epsilon>0$, then $X_n\stackrel{a.s.}{\to} X$.
\end{lem}
\subsection{Notation used in the main results}
\label{sec:notation-main-results}
We now proceed to analyze the evolution of the distance between $\bvec{x}^\star$ and the iterates produced by DiFIGHT. Before presenting the main results, we list the notation used hereafter in the paper:
\begin{itemize}
\item $\bvec{h}^k = \left[\opnorm{\bvec{x}^k_1 - \bvec{x}^\star}_2\ ,\cdots,\ \opnorm{\bvec{x}^k_L - \bvec{x}^\star}_2\right]^t,\ \forall k\ge 0$.
%\item $\bvec{\rho}_i = \abs{1-\mu_i\frac{\beta_{i,2K}+\alpha_{i,2K}}{2}} + \mu_i\frac{\beta_{i,2K} - \alpha_{i,2K}}{2},\ 1\le i\le L$.
\item At any step $k\ge 0,\ 1\le i\le L$, $\Lambda_i^{k}$ is the support set of $\bvec{x}_i^k$.
\item $\omega_i = \abs{1-\mu_i\frac{\beta_{i,3K}+\alpha_{i,3K}}{2}} + \mu_i\frac{\beta_{i,3K} - \alpha_{i,3K}}{2},\ 1\le i\le L$.
%\item $\bvec{D}_1 = \diag{\sqrt{\frac{2}{1-\rho_1^2}},\cdots,\ \sqrt{\frac{2}{1-\rho_L^2}}}$,
%\item $\bvec{D}_2 =  \diag{\frac{1}{1-\rho_1},\cdots,\ \frac{1}{1-\rho_L}}$,
\item $\bvec{\Omega} = \diag{\omega_1,\cdots,\ \omega_L},\ \bvec{M} = \diag{\mu_1,\cdots,\ \mu_L}.$
\item $\bvec{b} = [\opnorm{\nabla_{2K} f_1(\bvec{x}^\star)}_2,\cdots,\ \opnorm{\nabla_{2K} f_L(\bvec{x}^\star)}_2]^t$. 
%\item $\bvec{b}_2 = [\opnorm{\nabla_{K} f_1(\bvec{x}^\star)}_2,\cdots,\ \opnorm{\nabla_{K} f_L(\bvec{x}^\star)}_2]^t$
\item For any two vectors $\bvec{a}, \bvec{b}\in \real^n$, the inequality $\bvec{a} \preccurlyeq \bvec{b}$ implies that $a_i\le b_i,\ \forall i=1,2,\cdots,\ n$. 
\end{itemize}
\subsection{Main results}
\label{sec:main-results}
\subsubsection{Deterministic algorithms} 
The main results for the deterministic DiFIGHT algorithm are stated in theorem~\ref{thm:convergence-DiFIGHT}:
\begin{thm}
\label{thm:convergence-DiFIGHT}
Under the RPDH assumption, at any iteration $k$, the iterate produced by DiFIGHT as well as MoDIFIGHT satisfies the following inequlaity:\begin{align}
\label{eq:DiFIGHT-MoDIFIGHT-vector-inequality}
\bvec{h}^{k+1}\preccurlyeq \alpha\bvec{A}^t\bvec{\Omega} \bvec{h}^k + \alpha\bvec{A}^t\bvec{Mb},
\end{align} 
where $\alpha = \sqrt{3}$ for DiFIGHT, and $\alpha = 3$ for MoDiFIGHT.

Furthermore, if $\max_i\sum_{j=1}^L \omega_{j}a_{ji}<1/\alpha$, or, $\max_j \omega_{j}<1/\alpha$, then the matrix $\alpha\bvec{\bvec{A}^t\Omega}$ is stable{\footnote{A matrix is said to be stable if it has spectral radius less than unity.}} and consequently, there is at least one limit point of the sequence $\{\bvec{h}^k\}_{k\ge 0}$ and for any such limit point $\bvec{h}$, the following holds: \begin{align}
\label{eq:DiFIGHT-MoDiFIGHT-convergence}
\bvec{h}\preccurlyeq \alpha\left(\bvec{I}-\alpha\bvec{A}^t\bvec{\Omega}\right)^{-1}\bvec{A}^t\bvec{Mb}.
\end{align}
\end{thm}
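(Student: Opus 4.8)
The plan is to track the error vector $\bvec{h}^{n+1}$ entrywise, bounding each $\norm{\bvec{x}_i^{n+1} - \bvec{x}^\star}_2$ in terms of the $\norm{\bvec{x}_j^n - \bvec{x}^\star}_2$. First I would fix a node $i$ and analyze $\bvec{x}_i^{n+1} = H_K\!\left(\sum_j a_{ji}\hat{\bvec{\psi}}_j^{n+1}\right)$. Since $\bvec{x}^\star$ is $K$-sparse and $\bvec{x}_i^{n+1}$ is the best $K$-term approximation of $\bvec{s}_i := \sum_j a_{ji}\hat{\bvec{\psi}}_j^{n+1}$, the standard hard-thresholding inequality gives $\norm{\bvec{x}_i^{n+1} - \bvec{x}^\star}_2 \le 2\norm{\bvec{s}_i - \bvec{x}^\star}_2$ (this is the source of the factor $\alpha = 2$ for DiFIGHT; for MoDiFIGHT the extra $H_K$ in the $\hat{\bvec\psi}$ step costs another factor of $2$, giving $\alpha = 4$, via the same lemma applied twice and the triangle inequality). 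Then, using left-stochasticity $\sum_j a_{ji} = 1$, write $\bvec{s}_i - \bvec{x}^\star = \sum_j a_{ji}(\hat{\bvec\psi}_j^{n+1} - \bvec{x}^\star)$, so $\norm{\bvec{s}_i - \bvec{x}^\star}_2 \le \sum_j a_{ji}\norm{\hat{\bvec\psi}_j^{n+1} - \bvec{x}^\star}_2$ (for DiFIGHT $\hat{\bvec\psi}_j^{n+1} = \bvec\psi_j^{n+1}$; for MoDiFIGHT absorb the $H_K$ into the $\alpha$ as noted).

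The core estimate is then a per-node bound on $\norm{\bvec\psi_j^{n+1} - \bvec{x}^\star}_2 = \norm{\bvec{x}_j^n - \mu_j\nabla f_j(\bvec{x}_j^n) - \bvec{x}^\star}_2$. I would split $\nabla f_j(\bvec{x}_j^n) = \big(\nabla f_j(\bvec{x}_j^n) - \nabla f_j(\bvec{x}^\star)\big) + \nabla f_j(\bvec{x}^\star)$ and write
$\bvec\psi_j^{n+1} - \bvec{x}^\star = \big(\bvec{x}_j^n - \bvec{x}^\star - \mu_j(\nabla f_j(\bvec{x}_j^n) - \nabla f_j(\bvec{x}^\star))\big) - \mu_j\nabla f_j(\bvec{x}^\star)$,
i.e. $\bvec\psi_j^{n+1} - \bvec{x}^\star = \bvec{g}(\bvec{x}_j^n, \bvec{x}^\star) - \mu_j\nabla f_j(\bvec{x}^\star)$ in the notation of Lemma~\ref{lem:RPDH-inprod-inequality} with $\rho = \mu_j$. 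Here the relevant supports are $\support(\bvec{x}_j^n), \support(\bvec{x}^\star), \support(\bvec{x}_i^{n+1})$ — all of size at most $K$ — so the union $T$ has $\abs{T} \le 3K$, which is why the constants $\alpha_{j,3K}, \beta_{j,3K}$ and hence $\omega_j$ appear. To bring in the inner-product form of the lemma, I would take the $T_1$-restricted-norm bound \eqref{eq:rpdh-inequality-result2}: after hard thresholding, the error lives on $\support(\bvec{x}_i^{n+1}) \cup \support(\bvec{x}^\star)$, and one applies \eqref{eq:rpdh-inequality-result2} with $T_1$ that set to get $\norm{(\bvec{g}(\bvec{x}_j^n,\bvec{x}^\star))_{T_1}}_2 \le \omega_j \norm{\bvec{x}_j^n - \bvec{x}^\star}_2$, while the residual term contributes $\mu_j\norm{(\nabla f_j(\bvec{x}^\star))_{T_1}}_2 \le \mu_j\norm{\nabla_{2K} f_j(\bvec{x}^\star)}_2 = \mu_j b_j$ (the $2K$ because $T_1$ has at most $2K$ elements). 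Collecting: $\norm{\bvec\psi_j^{n+1} - \bvec{x}^\star}_2 \le \omega_j\norm{\bvec{x}_j^n - \bvec{x}^\star}_2 + \mu_j b_j$. Substituting back and reading off the matrix form yields $\bvec{h}^{n+1} \le \alpha \bvec{A}^t\bvec\Omega\bvec{h}^n + \alpha\bvec{A}^t\bvec{Mb}$, which is \eqref{eq:DiFIGHT-MoDIFIGHT-vector-inequality}.

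For the second part, given the recursion and the stability hypothesis, I would invoke Lemma~\ref{lem:convergence-linear-inequality} with $\bvec{B} = \alpha\bvec{A}^t\bvec\Omega$ and $\bvec{b}$ replaced by $\alpha\bvec{A}^t\bvec{Mb}$: once $\bvec{B}$ is stable, $(\bvec{I} - \bvec{B})^{-1} = \sum_{k\ge 0}\bvec{B}^k$ is non-negative, the sequence $\{\bvec{h}^n\}$ is bounded (hence has a limit point), and any limit point $\bvec{h}$ satisfies $\bvec{h} \le (\bvec{I} - \bvec{B})^{-1}\bvec{B}\cdot\alpha\bvec{A}^t\bvec{Mb} = \alpha(\bvec{I} - \alpha\bvec{A}^t\bvec\Omega)^{-1}\bvec{A}^t\bvec{Mb}$, which is \eqref{eq:DiFIGHT-MoDiFIGHT-convergence} (using that $(\bvec{I}-\bvec{B})^{-1}\bvec{B} = (\bvec{I}-\bvec{B})^{-1} - \bvec{I}$, or simply keeping the factored form). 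It remains to justify stability from either sufficient condition. Both $\bvec{A}^t\bvec\Omega$ and $\bvec\Omega\bvec{A}^t$ are non-negative with the same spectral radius; by Lemma~\ref{lem:irreducibility} (connectedness of the graph) $\bvec{A}^t$, and hence $\bvec{A}^t\bvec\Omega$, is irreducible, so Perron–Frobenius (Theorem~\ref{thm:perron-frobenius}, part~2) bounds its spectral radius by $\max_i \sum_j \omega_j a_{ji}$; the hypothesis $\max_i\sum_j\omega_j a_{ji} < 1/\alpha$ then forces spectral radius $< 1$. The alternative condition $\max_j\omega_j < 1/\alpha$ gives the same conclusion since $\sum_j \omega_j a_{ji} \le \max_j\omega_j$ by left-stochasticity of $\bvec{A}$.

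The main obstacle I anticipate is the bookkeeping of supports and the correct application of Lemma~\ref{lem:RPDH-inprod-inequality} so that exactly $\omega_j$ (with the $3K$-subscripted constants) and $\norm{\nabla_{2K}f_j(\bvec{x}^\star)}_2$ emerge — in particular choosing the right $T_1$ in \eqref{eq:rpdh-inequality-result2} and verifying $\abs{T} \le 3K$, $\abs{T_1} \le 2K$ — together with pinning down the factor-of-$2$ (resp. $4$) from hard thresholding cleanly for both variants. The Perron–Frobenius/irreducibility step is routine given the lemmas already in hand.
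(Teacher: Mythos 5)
Your proposal is correct and reaches the same per-node recursion, the same constants, and the same Perron--Frobenius/limit-point conclusion as the paper, but the route to the key estimate differs mechanically. The paper never uses the triangle-inequality form of the thresholding step: it expands $\norm{\bvec{x}_i^{n+1}-\bvec{s}_i}_2^2\le\norm{\bvec{x}^\star-\bvec{s}_i}_2^2$, cancels, and obtains $\norm{\bvec{x}_i^{n+1}-\bvec{x}^\star}_2^2\le 2\inprod{\bvec{x}_i^{n+1}-\bvec{x}^\star}{\bvec{s}_i-\bvec{x}^\star}$, after which the inner product against the ($\le 2K$)-sparse error vector automatically localizes the dense gradient terms, and result \eqref{eq:rpdh-inequality-result1} of Lemma~\ref{lem:RPDH-inprod-inequality} plus Cauchy--Schwarz delivers $\omega_j$ and $\norm{\nabla_{2K}f_j(\bvec{x}^\star)}_2$ in one stroke. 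Your route via the factor-of-two thresholding bound and result \eqref{eq:rpdh-inequality-result2} works and gives identical constants, but only if you use the \emph{support-restricted} version $\norm{\bvec{x}_i^{n+1}-\bvec{x}^\star}_2\le 2\norm{(\bvec{s}_i-\bvec{x}^\star)_{T_1}}_2$ with $T_1=\support(\bvec{x}_i^{n+1})\cup\support(\bvec{x}^\star)$ (obtained by cancelling the common off-$T_1$ energy $\norm{(\bvec{s}_i)_{T_1^c}}_2^2$ from both sides before applying the triangle inequality). The full-norm version you state literally, $\norm{\bvec{x}_i^{n+1}-\bvec{x}^\star}_2\le 2\norm{\bvec{s}_i-\bvec{x}^\star}_2$, would force you to control the gradient over all $N$ coordinates and you could not recover $\nabla_{2K}f_j(\bvec{x}^\star)$ or the $3K$-restricted constants; your subsequent bookkeeping shows you intend the restricted form, so this is an imprecision rather than a gap, but it is the one place the argument would collapse if taken at face value. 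One further small slip: you write the limit-point bound as $(\bvec{I}-\bvec{B})^{-1}\bvec{B}\cdot\alpha\bvec{A}^t\bvec{Mb}$ and then equate it to $\alpha(\bvec{I}-\alpha\bvec{A}^t\bvec{\Omega})^{-1}\bvec{A}^t\bvec{Mb}$; these differ by a factor of $\bvec{B}$, and the paper's own appendix proof of Lemma~\ref{lem:convergence-linear-inequality} actually establishes $\bvec{v}\bvec{\le}(\bvec{I}-\bvec{B})^{-1}\bvec{b}$ (without the extra $\bvec{B}$), which is the form matching \eqref{eq:DiFIGHT-MoDiFIGHT-convergence}; so drop the extra $\bvec{B}$ rather than the factored form.
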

\begin{proof}
The proof is presented in Appendix~\ref{sec:appendix-proof-thm-diffusion-iht}.
\end{proof}
%
%\begin{thm}
%\label{thm:convergence-difhtp}
%Under the RPDH assumption, at any iteration $n$, the iterate produced by DIFHTP satisfies the following inequality:
%\begin{align}
%\label{eq:DIFHTP-vector-inequality}
%\bvec{h}^{n+1} & \bvec{\le} \bvec{D}_1\bvec{A}^t\bvec{\Omega h}^n+\bvec{D}_1\bvec{A}^t\bvec{M}\bvec{b}+\bvec{D}_2\bvec{M}\bvec{b}_2.
%\end{align}
%
%Moreover, if $\max_i\frac{\sum_{j=1}^L \omega_j a_{ji}}{\sqrt{1-\rho_i^2}}<\frac{1}{\sqrt{2}}$, or, $ \frac{\max_i\omega_i}{\sqrt{1-\left(\max_i \rho_i\right)^2}}<\frac{1}{\sqrt{2}}$, then \begin{align}
%\label{eq:DIFHTP-convergence}
%\limsup_{n\to \infty}\bvec{h}^n\bvec{\le} \left(\bvec{I}-\bvec{D}_1\bvec{A}^t\bvec{\Omega}\right)^{-1}\left(\bvec{D}_1\bvec{A}^t\bvec{Mb} + \bvec{D}_2\bvec{M}\bvec{b}_2\right).
%\end{align}
%\end{thm}
%
%specifically, \begin{itemize}
%\item ATC (adapt then combine): In this case $\bvec{A_1=I}$. Thus a sufficient condition for successful joint recovery becomes \begin{align}
%\label{eq:vector-inequality-atc}
%\lambda_{max}(\bvec{A}_2^t\bvec{D}^2\bvec{A}_2)<1/4
%\end{align} 
%\item CTA (combine then adapt): In this case $\bvec{A_2=I}$. Thus a sufficient condition for successful joint recovery becomes  \begin{align}
%\label{eq:vector-inequality-cta}
%\lambda_{max}(\bvec{A}_1^t\bvec{D}^2\bvec{A}_1)<1/4
%\end{align} 
%\end{itemize}
%\begin{proof}
%The proof is presented in Appendix~\ref{sec:appendix-DIFHTP-analysis}.
%\end{proof}
%
\subsubsection{Randomized algorithms}
We now present the main results regarding the convergence of the randomized DiFIGHT and DIFHTP algorithms for different random selection strategies. For this purpose, we introduce the diagonal matrix $\bvec{P}=\diag{\pi_1,\cdots,\ \pi_L}$ with diagonal entries $\pi_i,\ 1\le i\le L$ which were defined in Section~\ref{sec:communication-complexity-discussion} and were evaluated explicitly in Table~\ref{tab:communication-complexities} for uniform distribution for selection of group of nodes.
\begin{thm}
\label{thm:convergence-randomized-diffusion-algorithms}
Under the RPDH condition and the randomly persistent network assumption, the iterates of the randomized DiFIGHT as well as randomized MoDiFIGHT satisfy the following inequalities at time step $n$:
\begin{align}
\label{eq:randomized-DiFIGHT-MoDiFIGHT-evolution-thm}
\expect{\bvec{h}^{k+1}}\preccurlyeq (\bvec{I}-\bvec{P}+\alpha\bvec{PA}^t\bvec{\Omega})\expect{\bvec{h}^k}+\alpha \bvec{PA}^t\bvec{Mb},
\end{align}
where $\alpha = \sqrt{3}$ for DiFIGHT, and $\alpha = 3$ for MoDiFIGHT.
Consequently, under the condition, $\max_{i}\sum_{j=1}^L a_{ji}\omega_j<1/\alpha$ or, $\max_{i} \omega_i<1/\alpha$, there is at least one limit point of the sequence $\{\expect{\bvec{h}^k}\}_{k\ge 0}$, and for any such limit point $\bvec{h}$, the following bound is satisfied:
\begin{align}
\label{eq:randomized-DiFIGHT-MoDiFIGHT-convergence}
\bvec{h}\preccurlyeq \alpha\left(\bvec{I}-\alpha\bvec{A}^t\bvec{\Omega}\right)^{-1}\bvec{A}^t\bvec{Mb}.
\end{align}
Furthermore, if $\bvec{x}^\star$ is a stationary point of all the functions $f_i,\ i=1,\cdots,L$, under the above specified condition, we have, for $i=1,\cdots,L$, $\bvec{x}_i^k\to\bvec{x}^\star$ a.s.
%\item For randomized DIFHTP:
%\begin{align}
%\expect{\bvec{h}^{n+1}} & \bvec{\le } \bvec{D}_1(\bvec{I}-\bvec{P}+\bvec{PA})\bvec{\Omega}\left(\expect{\bvec{h}^n}+\bvec{\Omega}^{-1}\bvec{Mb}\right)\nonumber\\
%\label{eq:randomized-DIFHTP-evolution-thm}
%\ & + \bvec{D}_2\bvec{M b}_2.
%\end{align}
%\end{enumerate}
\end{thm}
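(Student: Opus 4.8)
The plan is to mimic the single-iteration analysis behind Theorem~\ref{thm:convergence-DiFIGHT}, but to run it \emph{conditionally on the past} so that the randomness of the selected group gets averaged out, and then to feed the resulting linear recursion into Lemma~\ref{lem:convergence-linear-inequality}. Fix a step $n$ and let $\mathcal{F}_n$ be the $\sigma$-algebra generated by all the iterates up to $\bvec{x}^n_1,\cdots,\bvec{x}^n_L$. Under the randomly persistent network assumption the active group $G$ at step $n$ is drawn independently of $\mathcal{F}_n$ with $\mathbb{P}(v\in G)=\pi_v$, the $\pi_v$ being those tabulated in Table~\ref{tab:communication-complexities}. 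The first target is the one-step inequality~\eqref{eq:randomized-DiFIGHT-MoDiFIGHT-evolution-thm}; from it~\eqref{eq:randomized-DiFIGHT-MoDiFIGHT-convergence} will follow.

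For the one-step bound I would split on whether $v\in G$. If $v\in G$, then $\bvec{x}_v^{n+1}=H_K\!\big(\sum_{u}a_{uv}\hat{\bvec{\psi}}_u^{n+1}\big)$, and $v\in G$ together with $u\in\mathcal{N}_v$ forces $v\in\mathcal{N}_u\cap G$, so every $\hat{\bvec{\psi}}_u^{n+1}$ entering this update is a freshly recomputed gradient-step vector --- exactly the configuration treated in the deterministic proof. Re-running that argument gives, componentwise, $h_v^{n+1}\le\alpha\sum_{u}a_{uv}\big(\omega_u h_u^n+\mu_u b_u\big)$: the factor $\alpha$ comes from non-expansiveness of $H_K$ toward the $K$-sparse $\bvec{x}^\star$ ($\alpha=2$ for DiFIGHT, and $\alpha=4$ for MoDiFIGHT, which additionally thresholds each $\hat{\bvec{\psi}}_u^{n+1}$); left-stochasticity $\sum_u a_{uv}=1$ and the triangle inequality pull the combination outside the norm; and Lemma~\ref{lem:RPDH-inprod-inequality}(2), applied with $\bvec{y}=\bvec{x}_u^n$, $\bvec{z}=\bvec{x}^\star$, $\rho=\mu_u$, on a support union of size $\le 3K$, controls the resulting term by $\omega_u\norm{\bvec{x}_u^n-\bvec{x}^\star}_2$ plus $\mu_u\norm{\nabla_{2K}f_u(\bvec{x}^\star)}_2=\mu_u b_u$. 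If instead $v\notin G$, then $\bvec{x}_v^{n+1}=\bvec{x}_v^n$, hence $h_v^{n+1}=h_v^n$. Writing $\bvec{S}_n$ for the diagonal $0/1$ matrix whose $v$-th entry is $1$ exactly when $v\in G$, both cases combine into the random vector inequality
\begin{align*}
\bvec{h}^{n+1}\ \bvec{\le}\ \bvec{S}_n\bvec{A}^t\!\left(\alpha\bvec{\Omega}\bvec{h}^n+\alpha\bvec{Mb}\right)+\left(\bvec{I}-\bvec{S}_n\right)\bvec{h}^n.
\end{align*}
Taking $\expect{\cdot\mid\mathcal{F}_n}$, using that $\bvec{h}^n$ is $\mathcal{F}_n$-measurable while $\expect{\bvec{S}_n\mid\mathcal{F}_n}=\expect{\bvec{S}_n}=\bvec{P}$, and then taking total expectation yields a recursion of the form $\expect{\bvec{h}^{n+1}}\bvec{\le}\alpha(\bvec{I}-\bvec{P}+\bvec{P}\bvec{A}^t)\bvec{\Omega}\big(\expect{\bvec{h}^n}+\bvec{\Omega}^{-1}\bvec{Mb}\big)$, which is~\eqref{eq:randomized-DiFIGHT-MoDiFIGHT-evolution-thm}, the held-fixed rows being absorbed into the $\bvec{I}-\bvec{P}$ block.

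For the asymptotic estimate, put $\bvec{B}=\alpha(\bvec{I}-\bvec{P}+\bvec{P}\bvec{A}^t)$, which is nonnegative, so the recursion reads $\expect{\bvec{h}^{n+1}}\bvec{\le}\bvec{B}\bvec{\Omega}\expect{\bvec{h}^n}+\bvec{B}\bvec{Mb}$ with $\bvec{B}\bvec{\Omega}$ and $\bvec{B}\bvec{Mb}$ nonnegative. The $i$-th row sum of $\bvec{B}\bvec{\Omega}$ equals $\alpha\big(\omega_i(1-\pi_i)+\pi_i\sum_j a_{ji}\omega_j\big)$, which is $<1$ under the first hypothesis, and also under the second since $\sum_j a_{ji}=1$ forces $\omega_i(1-\pi_i)+\pi_i\sum_j a_{ji}\omega_j\le\max_j\omega_j<1/\alpha$. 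Hence the spectral radius of $\bvec{B}\bvec{\Omega}$ is bounded by its maximum row sum (via the $\ell_\infty$-induced operator norm, or via Perron--Frobenius, Theorem~\ref{thm:perron-frobenius}, together with the irreducibility furnished by Lemma~\ref{lem:irreducibility}), so $\bvec{B}\bvec{\Omega}$ is stable. Stability makes $\{\expect{\bvec{h}^n}\}$ a bounded sequence of nonnegative vectors, hence it has a limit point, and Lemma~\ref{lem:convergence-linear-inequality} applied with matrix $\bvec{B}\bvec{\Omega}$ and vector $\bvec{B}\bvec{Mb}$ delivers~\eqref{eq:randomized-DiFIGHT-MoDiFIGHT-convergence}.

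The genuinely routine ingredients are the per-node contraction estimate (it repeats the computation already done for Theorem~\ref{thm:convergence-DiFIGHT}) and the stability test (a one-line row-sum bound). The step that demands the most care, and where I expect the main obstacle, is assembling the random one-step inequality and passing to its conditional expectation: one must set up the filtration so that $\bvec{x}^n$ is measurable while the group drawn at step $n$ is independent of it, handle the $v\notin G$ rows correctly (there $\bvec{h}$ is merely held fixed, not contracted), and identify $\expect{\bvec{S}_n}=\bvec{P}$ so that the \emph{effective averaged operator} is precisely $\bvec{I}-\bvec{P}+\bvec{P}\bvec{A}^t$ --- pinning down this operator is what the whole theorem rests on. A secondary technical point is justifying the limit-supremum manipulation inside Lemma~\ref{lem:convergence-linear-inequality} for the expected-error sequence, which relies on the pointwise boundedness of $\bvec{h}^n$ guaranteed by the recursion itself.
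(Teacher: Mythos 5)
Your route is the same as the paper's: write a random one-step vector inequality in which the non-selected nodes get identity rows, average over the group selection (independent of the past) to identify $\expect{\bvec{A}_n}=\bvec{AP}+\bvec{I}-\bvec{P}$ and hence the operator $\alpha(\bvec{I}-\bvec{P}+\bvec{PA}^t)\bvec{\Omega}$, then establish stability via row sums / Perron--Frobenius and close with Lemma~\ref{lem:convergence-linear-inequality}. Your filtration set-up, the observation that every $\hat{\bvec{\psi}}_u^{n+1}$ feeding an active node is freshly recomputed, and the row-sum verification of both sufficient conditions are exactly the points the paper leaves implicit, and they are handled correctly.

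The one step that does not go through is the ``absorption'' of the held-fixed rows. Your intermediate inequality $\bvec{h}^{n+1}\bvec{\le}\bvec{S}_n\bvec{A}^t(\alpha\bvec{\Omega}\bvec{h}^n+\alpha\bvec{Mb})+(\bvec{I}-\bvec{S}_n)\bvec{h}^n$ is the correct one, and its last term has expectation $(\bvec{I}-\bvec{P})\expect{\bvec{h}^n}$. To reach~\eqref{eq:randomized-DiFIGHT-MoDiFIGHT-evolution-thm} you must dominate this by $\alpha(\bvec{I}-\bvec{P})(\bvec{\Omega}\expect{\bvec{h}^n}+\bvec{Mb})$, i.e., you need $(1-\pi_v)\expect{h^n_v}\le\alpha(1-\pi_v)\left(\omega_v\expect{h^n_v}+\mu_v b_v\right)$ for every $v$; this essentially requires $\alpha\omega_v\ge 1$, whereas the convergence half of the theorem operates precisely under $\omega_v<1/\alpha$ (take $b_v=0$ to see it fail). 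So your tighter inequality does not imply the stated one. The paper's own proof contains the identical gap: its inequality~\eqref{eq:error-evolution-randomized-difight-modifight} silently asserts $h^n_v\le\alpha(\omega_v h^n_v+\mu_v b_v)$ for $v\notin G$ without justification. What your (correct) inequality actually yields is the recursion $\expect{\bvec{h}^{n+1}}\bvec{\le}\left[(\bvec{I}-\bvec{P})+\alpha\bvec{P}\bvec{A}^t\bvec{\Omega}\right]\expect{\bvec{h}^n}+\alpha\bvec{P}\bvec{A}^t\bvec{Mb}$, which is stable under the weaker condition $\alpha\max_i\sum_{j=1}^{L}a_{ji}\omega_j<1$ and leads to a fixed-point bound of the same character but with a different matrix; if you want the theorem verbatim you would have to either add a hypothesis forcing $\alpha\omega_v\ge 1$ (incompatible with the stability hypothesis) or restate the recursion with the corrected operator.
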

\begin{proof}
The proof is supplied in Appendix~\ref{sec:appendix-randomized-algorithms-analysis}.
\end{proof}
%
%\subsection{Example two node network}
%Let us consider a two node network to analyze the performance of the DiFIGHT and DIFHTP algorithms in detail. We consider the following combination matrix \begin{align*}
%\bvec{A} = \begin{bmatrix}
%a & 1-a \\
%1-a & a
%\end{bmatrix}
%\end{align*}  
%where $a\in [0,1]$. Then, for DiFIGHT, we have \begin{align*}
%\bvec{B} =2\begin{bmatrix}
%\rho_1 a & \rho_2 (1-a)\\
%\rho_1 (1-a) & \rho_2 a
%\end{bmatrix}
%\end{align*}
\section{Simulation results}
\label{sec:simulation-results}
\subsection{Simulation setup}
\label{sec:simulation-setup}
In this section we perform numerical study of the DiFIGHT and MoDiFIGHT algorithms along with the Consensus IHT algorithm, which is similar to DiFIGHT, only with the exception that the nodes first exchange the estimates and then use their individual gradient vector for the hard thresholding update. We also plot the performance of the non-cooperative IHT, where all nodes simply run their own algorithm and do not communicate with each other, as well as the performance of the centralized IHT algorithm which is executed using the measurements available throughout all the nodes in the network. In all the experiments, the unknown vector $\bvec{x}$ has a fixed dimension $n=200$, and sparsity $K=10$. The indices for support of $\bvec{x}$ is sampled uniformly from $1,2,\cdots,\ n$, and then the values at those indices are generated according to $\mathcal{N}(0,1)$ distribution. We consider networks with $L = 10,\ 15$ nodes for our experiments. For each $L$, the network is generated using Erd\H{o}s-Reyni model where there is a link between two nodes is with probability $p$, and not generated with probability $1-p$. $p=\frac{\ln L}{L}$ is selected to get a connected graph with high probability~\cite{erdds1959random}. The generated graph is checked for full connectivity using depth-first search algorithm, and the process is continued until a connected graph is obtained. The adjacency matrix of the graph thus obtained, is normalized to make it left stochastic, and is used as the combination matrix $\bvec{A}$. For each node, the measurement model is taken to be the noiseless linear measurement model, where the node $v$ has a measurement $\bvec{y}_v$ available with it which is obtained from an unknown signal $\bvec{x}^\star$ via the linear transformation $\bvec{y}_v = \bvec{\Phi}_v\bvec{x}^\star$, where the $\bvec{\Phi}_v$ is a $m\times n$ measurement matrix that is generated with entries sampled from i.i.d.~$\mathcal{N}(0,1/m)$ distribution. All the algorithms are run for $100$ instances and for each instance independent copies of the target vector $\bvec{x}^\star$, and the measurement matrix $\bvec{\Phi}_v$ are generated. However, for a particular $L$, the underlying network is kept the same throughout all these instances. 
\subsection{Probability of recovery performance}
\label{sec:simulation-probability-recovery}
\begin{figure}[t!]
%
%\begin{subfigure}{.5\textwidth}
%\centering
%\includegraphics[height= 2.5in, width = 3.5in]{Recovery_probability_vs_total_no_measurements_N=200_K=10_L=5_determinstic_strategies}
%\caption{$L=5$}
%\end{subfigure}
%
\begin{subfigure}{.5\textwidth}
\centering
\includegraphics[height= 2.5in, width = 3.5in]{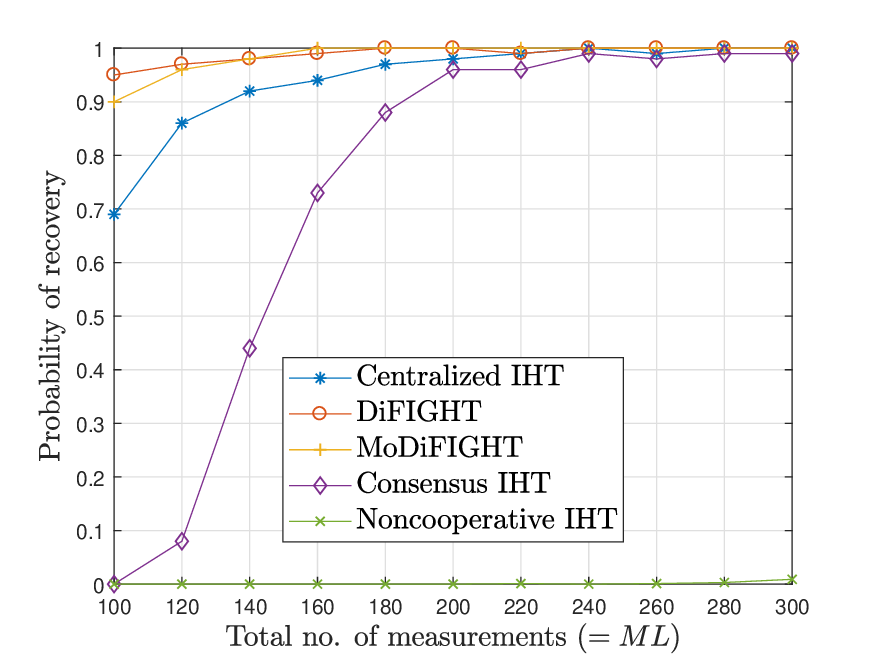}
\caption{$L=10$}
\label{fig:prob-recovery-perfo-deterministic-L=10}
\end{subfigure}
\begin{subfigure}{.5\textwidth}
\centering
\includegraphics[height= 2.5in, width = 3.5in]{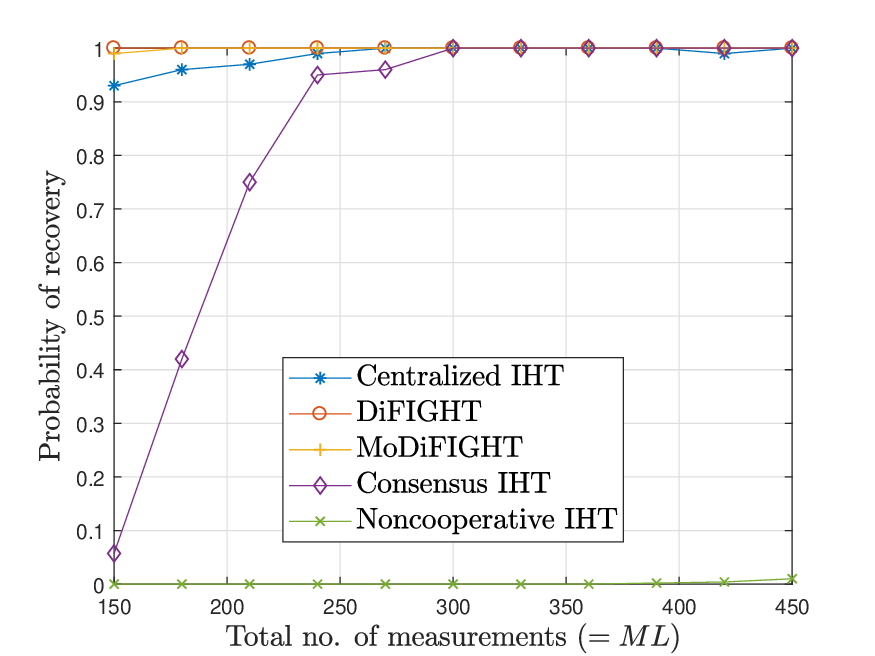}
\caption{$L=15$}
\label{fig:prob-recovery-perfo-deterministic-L=15}
\end{subfigure}
\caption{Probability of recovery vs number of measurements using all nodes in the network}
\label{fig:probability-recovery-perfo-deterministic}
\end{figure}
\begin{figure}[t!]
%
%\begin{subfigure}{.5\textwidth}
%\centering
%\includegraphics[height= 2.5in, width = 3.5in]{Recovery_probability_vs_total_no_measurements_N=200_K=10_L=5_randomized_strategies}
%\caption{$L=5$}
%\end{subfigure}
%
\begin{subfigure}{.5\textwidth}
\centering
\includegraphics[height= 2.5in, width = 3.5in]{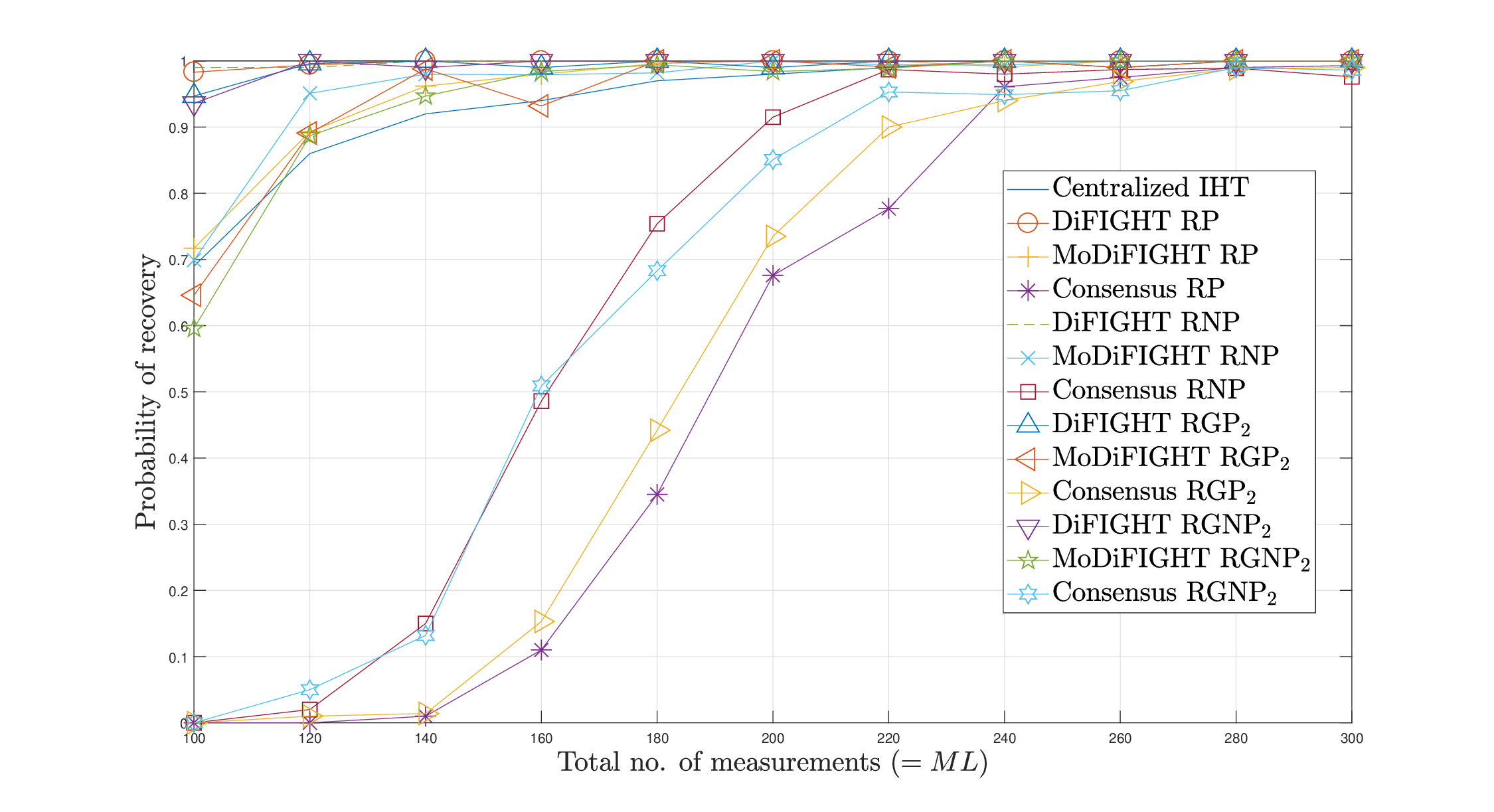}
\caption{$L=10$}
\label{fig:prob-recovery-perfo-random-L=10}
\end{subfigure}
\begin{subfigure}{.5\textwidth}
\centering
\includegraphics[height= 2.5in, width = 3.5in]{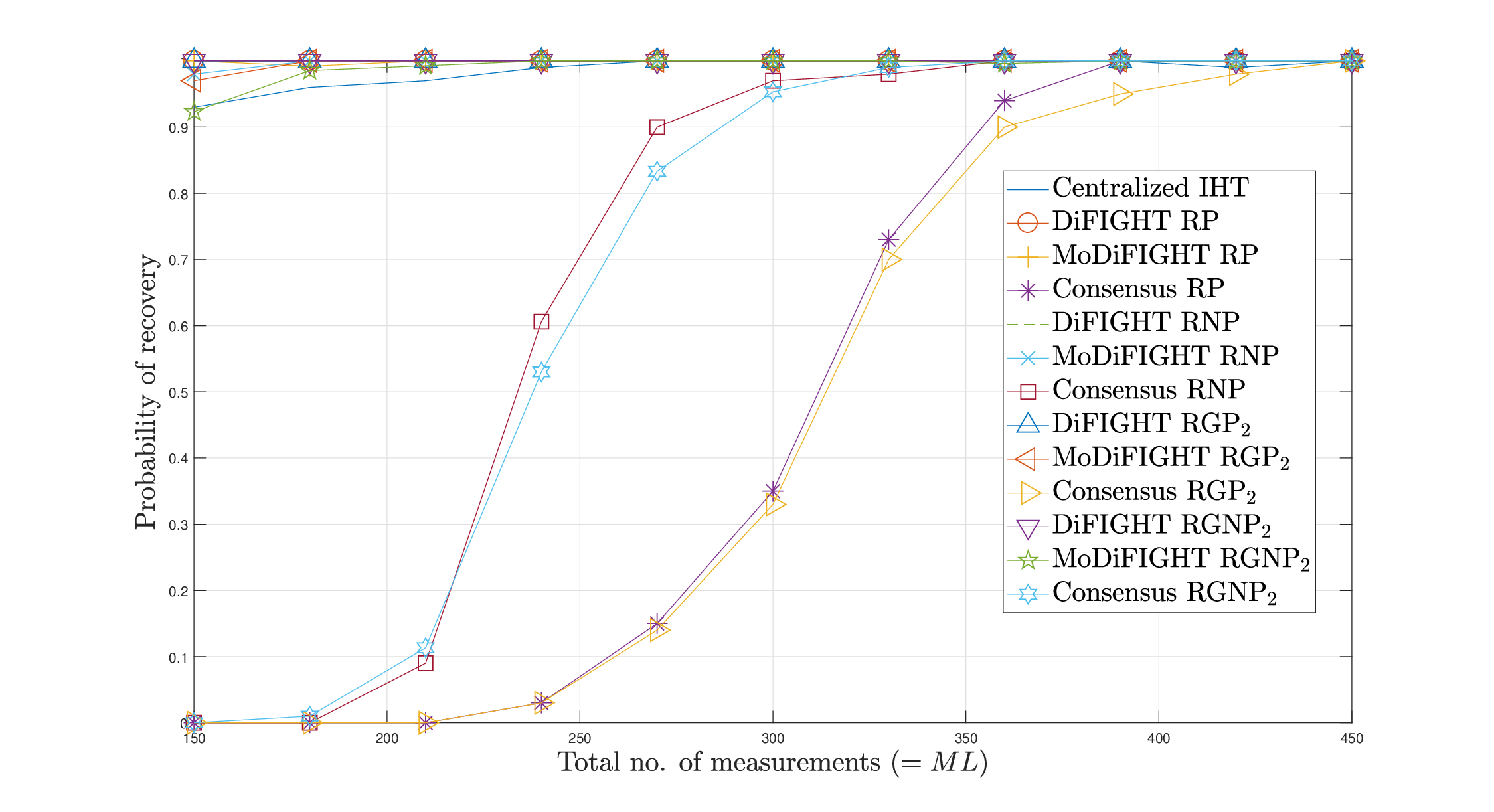}
\caption{$L=15$}
\label{fig:prob-recovery-perfo-random-L=15}
\end{subfigure}
\caption{Probability of recovery vs number of measurements for randomized node selection strategies}
\label{fig:probability-recovery-perfo-randomized}
\end{figure}
In this experiment we plot the probability with which the different algorithms recover the unknown signal $\bvec{x}^\star$. Note that in the horizontal axis we have the \emph{total} number of measurements available throughout the network. The performance of the centralized IHT is evaluated taking all these measurements. On the other hand, for the distributed algorithm, each node has access to considerably smaller number of measurements, for example, if the total number of measurements is $150$ and the network size is $L=10$, then each node of the network has access to only $15$ measurements. To calculate the probability of recovery, we calculate the number of instances (out of the $100$ instances) in which an algorithm has a ``successful'' recovery, where a successful recovery is quantified as follows: 1) for the centralized algorithm, we call an instance or run of an algorithm successful if the estimate $\hat{\bvec{x}}$ produced by the algorithm satisfies $\frac{\opnorm{\hat{\bvec{x}} - \bvec{x}^\star}_2^2}{\opnorm{\bvec{x}^\star}_2^2}<10^{-4} $. 2) For the distributed case, an instance or a run of an algorithm is called successful if the algorithm produces estimates $\{\hat{\bvec{x}}_i\}_{1\le i\le L}$ in all the nodes of the network, such that $\frac{\sum_{j=1}^L \opnorm{\hat{\bvec{x}}_j - \bvec{x}^\star}_2^2}{L\opnorm{\bvec{x}^\star}_2^2}<10^{-4}$. 

\paragraph{\textbf{Performance of the deterministic algorithms}} The figure~\ref{fig:probability-recovery-perfo-deterministic} compares the probability of recoveries of the different algorithms considered in this paper. From this figure, one can appreciate the substantial amount performance gain offered by the DiFIGHT and MoDiFIGHT algorithms over the consensus IHT algorithm and even over that of the centralized algorithm. This gain can be explained using the fact that these diffusion algorithm leverage the diversity offered by the different gradient vectors gathered from the neighborhood of a node. We also see that the distributed algorithms require very small number of measurements for successful recovery compared to the standalone algorithms, as is exemplified by the abysmal performance of the non-cooperative IHT algorithm. For example, from the Figure~\ref{fig:prob-recovery-perfo-deterministic-L=15}, we see that all the distributed algorithms have recovery probability $1$ after $m$ crosses $20$, whereas the recovery probability of the non-cooperative algorithm is almost $0$ even when $m$ is close to $30$. We also observe that the performance of the DiFIGHT and MoDiFIGHT algorithms are very close, with the latter exhibiting slightly poorer performance than the former only for small $m$ (~$10-12$ ).
\paragraph{\textbf{Performance under the randomized strategies}} The figure~\ref{fig:probability-recovery-perfo-randomized} demonstrates the relative performances of the DiFIGHT, MoDiFIGHT and consensus IHT algorithms for the different randomized node selection strategies which were discussed in Section~\ref{sec:randomized-strategies}. We have used group size $2$ for the experiment. From the plots we observe that the RP strategy performs the worst among all the four strategies, which is expected as only one node at a time is selected in this strategy. The RGP$_r$ strategy is slightly better than the RP strategy as a few nodes are selected. But the best strategies are seen to be the RNP and RGP$_r$ strategies as in both these strategies many neighboring nodes are selected at a time, which elevates the eprformance of the distributed algorithms, especially in dense networks.
\subsection{Mean square deviation performance}
\label{sec:simulation-msd-performance}
\begin{figure}[t!]
%
%\begin{subfigure}{.5\textwidth}
%\centering
%\includegraphics[height= 2.5in, width = 3.5in]{MSD_vs_iteration_no_N=200_K=10_L=5_M=30_deterministic_strategies}
%\caption{$M=30,\ L=5$}
%\end{subfigure}
%
\begin{subfigure}{.5\textwidth}
\centering
\includegraphics[height= 2.5in, width = 3.5in]{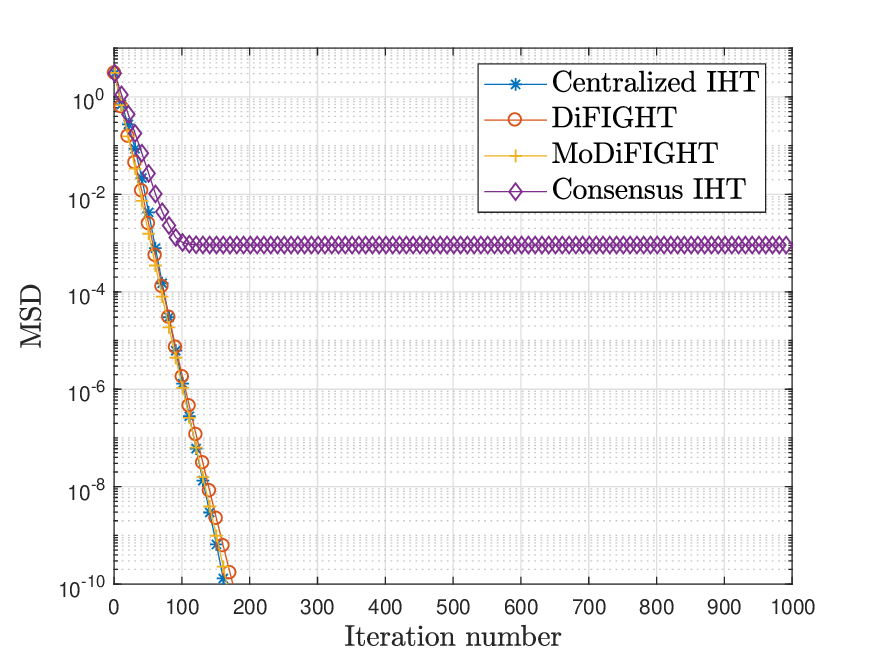}
\caption{$M=30,\ L=10$}
\end{subfigure}
\begin{subfigure}{.5\textwidth}
\centering
\includegraphics[height= 2.5in, width = 3.5in]{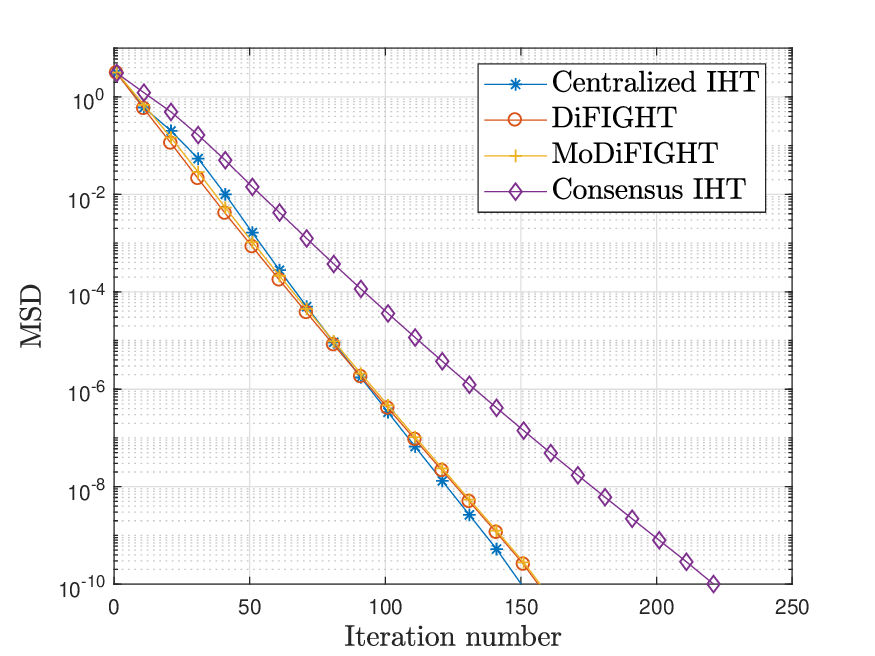}
\caption{$M=30,\ L=15$}
\end{subfigure}
\caption{MSD vs iteration number when all nodes are used, $M=30$}
\label{fig:msd-perfo-deterministic}
\end{figure}
\begin{figure}[t!]
%
%\begin{subfigure}{.5\textwidth}
%\centering
%\includegraphics[height= 2.5in, width = 3.5in]{MSD_vs_iteration_number_N=200_K=10_L=5_M=30_randomized_strategies}
%\caption{$M=30,\ L=5$}
%\end{subfigure}
%
\begin{subfigure}{.5\textwidth}
\centering
\includegraphics[height= 2.5in, width = 3.5in]{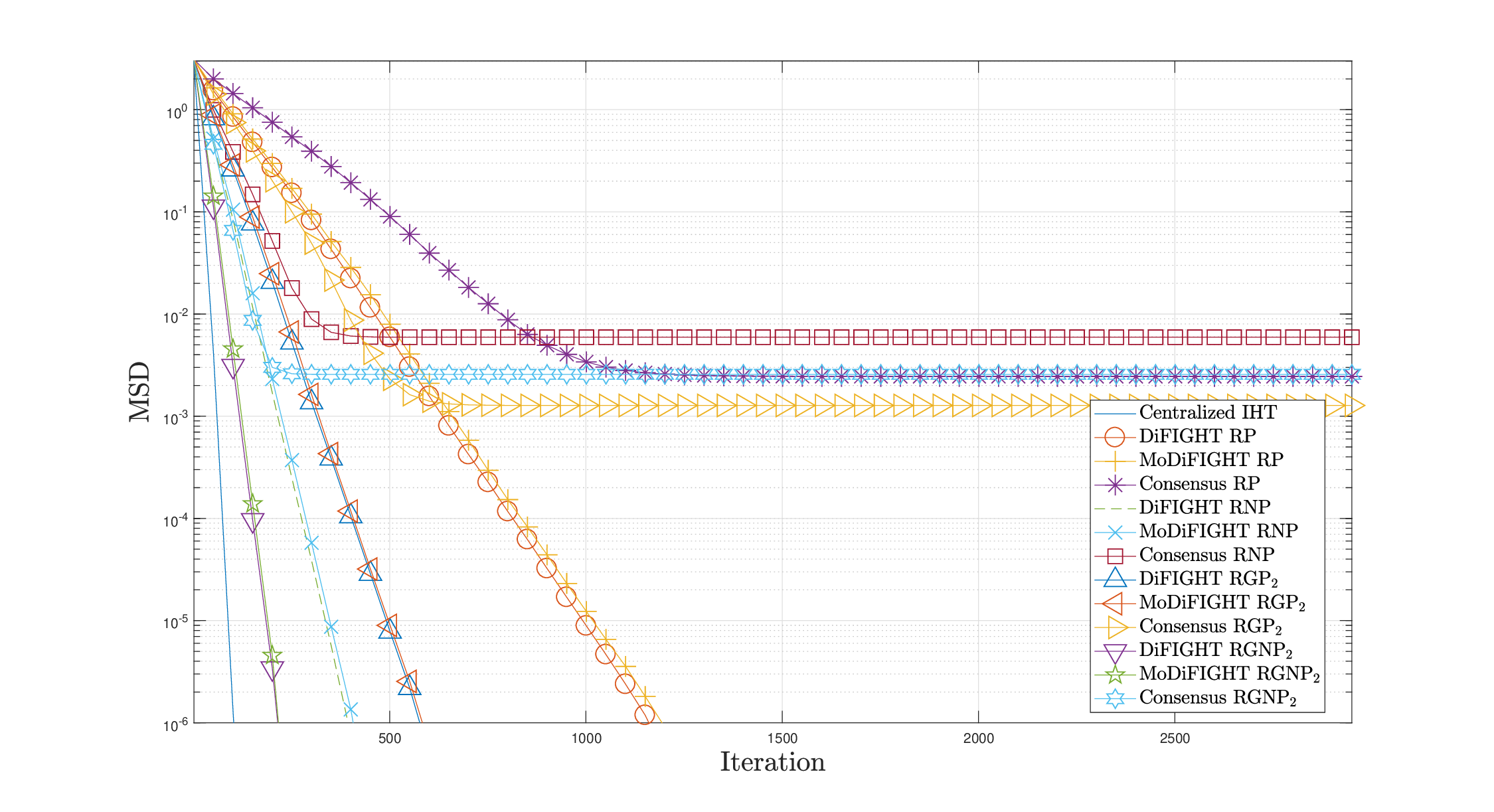}
\caption{$M=30,\ L=10$}
\end{subfigure}
\begin{subfigure}{.5\textwidth}
\centering
\includegraphics[height= 2.5in, width = 3.5in]{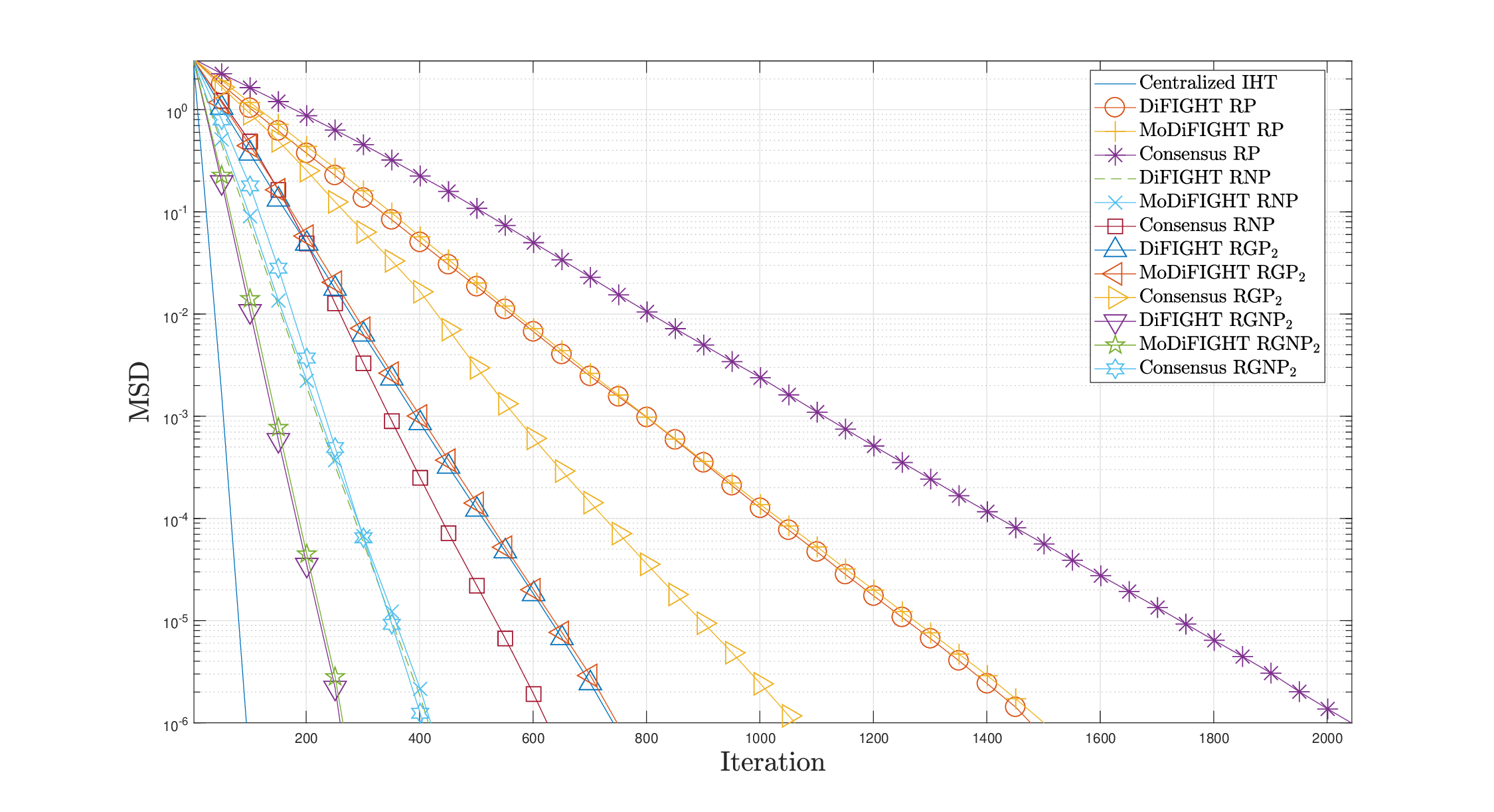}
\caption{$M=30,\ L=15$}
\end{subfigure}
\caption{MSD vs iteration number for randomized strategies, $M=30$}
\label{fig:msd-perfo-randomized}
\end{figure}
\paragraph{\textbf{Deterministic algorithms}}
We see that for $m=30,\ L=10$, although the consensus IHT converges too early to result in high mean square deviation (MSD), the DiFIGHT and MoDiFIGHT algorithms continue to have decreasing MSD and the rate of the algorithms match with that of the centralized IHT. For $m=30,\ L=15$, all the algorithms show good MSD performance with the DiFIGHT and MoDiFIGHT showing superior convergence rate.
\paragraph{\textbf{Randomized strategies}}
For both $m=30,\ L=10$ and $m=30,\ L=15$, we again see that for all the randomized strategies the consensus IHT has poorer MSD performance compared to the DiFIGHT and MoDiFIGHT algorithms. However, we observe that the convergence rate of each algorithm depends on the randomized strategy used, specifically according to decreasing convergence rate the strategies are seen to be ranked as: RGNP$_r$ $>$ RNP $>$ RGP$_r$ $>$ RP. This corroborates the intuition that cooperation in the network increases convergence speed.
%
%\subsection{Communication complexity performance}
%\label{sec:simulation-communication-complexity-performance}
%%
%
%%
\appendices
\section{Proof of Lemma~\ref{lem:irreducibility} }
\label{sec:appendix-proof-irreducibility-lemma}
The key observation is that as the graph $\mathcal{G}$ is connected, the associated weight matrix $\bvec{X}$ is irreducible. We now prove the three claims as below:\begin{enumerate}
\item Since $\bvec{X}$ is irreducible, this is trivially true since for any two indices $i,j$, $\exists $ a positive integer $t_{ji}$, such that $((\bvec{X}^t)^{t_{ji}})_{ij} = ((\bvec{X})^{t_{ji}})_{ji}>0$.
% However, this is trivially true as $\bvec{X}$ is irreducible, and hence $\exists k_{ji}$, a positive integer, such that $(\bvec{X}^{k_{ji}})_{ji}>0$, or equivalently, $((\bvec{X}^t)^{k_{ji}})_{ij}>0$. Thus, $\bvec{X}^t$ is irreducible.
\item Since each element of $\bvec{D}_1,\bvec{D}_2$ is strictly positive, $\exists \alpha>0$ such that $d_{1i},d_{2i}\ge \alpha,\ \forall i=1,\cdots,L$. Therefore, the $(i,j)^{\mathrm{th}}$ element of $\bvec{D}_1\bvec{X}\bvec{D}_2$ is $d_{1,i}(\bvec{X})_{ij}d_{2,j}\ge \alpha^2 (\bvec{X})_{ij}$. Since $\bvec{X}$ is irreducible, for any $1\le i,j\le L$, there exists a positive integer $ t_{ij}$ such that $(\bvec{X}^{t_{ij}})_{ij}>0$. Therefore, $((\bvec{D}_1\bvec{X}\bvec{D}_2)^{t_{ij}})_{ij}\ge \alpha^2 (\bvec{X}^{t_{ij}})_{ij}>0$, which establishes the claim.
\item Observe that for any $1\le i,j\le L$, $(\bvec{X} + \bvec{M})_{ij}\ge (\bvec{X})_{ij}$ and that $\bvec{X}$ is irreducible. Therefore, $\bvec{X} + \bvec{M}$ is irreducible.
\end{enumerate}
\section{Proof of Lemma~\ref{lem:convergence-linear-inequality} }
\label{sec:appendix-proof-lemma-convergence-linear-inequality }
First note that since the sequence $\{\bvec{u}^k\}_{k\ge 0}$, as well as the matrix $\bvec{B}$ and vector $\bvec{b}$ are non-negative, one finds that for each $k\ge 0$, $\bvec{u}^{k+1}\preccurlyeq \bvec{B}^{k+1}\bvec{u}^0 + \sum_{j=0}^k \bvec{B}^j \bvec{b}\preccurlyeq(\bvec{I}-\bvec{B})^{-1}(\bvec{u}^0 + \bvec{b})$, where we have used the fact that the Neumann series $\sum_{j=0}^\infty \bvec{B}^j$ converges to $(\bvec{I}-\bvec{B})^{-1}$ as the matrix $\bvec{B}$ is stable. Thus the sequence $\{\bvec{u}^k\}_{k\ge 0}$is non-negative as well as upper bounded, which ensures, by the Bolzano-Weierstrass theorem that there is at least one limit point of the sequence $\{\bvec{u}^k\}_{k\ge 0}$. Then, if  $\bvec{v}$ is a limit point of $\{\bvec{u}^k\}_{k\ge 0}, $ by definition, there exists a strictly increasing sequence of non-negative integers $k_j\ge 0, j=1,2,\cdots$, such that $\lim_{j\to \infty}\bvec{u}^{k_j}=\bvec{v}$ (see~\cite[Chapter 3]{rudin1964principles}). Consequently, we obtain, $\bvec{v}=\lim_{j\to \infty}\bvec{u}^{k_j}\preccurlyeq\lim_{j\to \infty}\bvec{B}^{k_j}\bvec{u}_0 + \sum_{i=0}^\infty \bvec{B}^i \bvec{b} = (\bvec{I}-\bvec{B})^{-1}\bvec{b}$, which concludes the proof.
\section{Proof of Theorem~\ref{thm:convergence-DiFIGHT} }
\label{sec:appendix-proof-thm-diffusion-iht}
To perform the analysis for both DiFIGHT and MoDiFIGHT we employ the technique of analysis of Theorem 3.5 of~\cite{foucart2011hard} and extend it to the distributed case.

We fix any $i=1,\cdots,L$. From the description of the DiFIGHT algorithm in Table~\ref{tab:DiFIGHT-MoDiFIGHT}, using the expression for $\bvec{\psi}_j^{k+1}$ and writing $\bvec{e}_j^k = \bvec{x}_j^k-\bvec{x}^\star-\mu_j\nabla f_j(\bvec{x}_j^K)-\nabla f_j(\bvec{x}^\star)$, one can derive using triangle inequality that, 
\begin{align}
\lefteqn{\opnorm{(\bvec{x}^{k+1}_i-\bvec{x}^\star)_{\Lambda_i^{k+1}}}_2 } & &\nonumber\\
\ & \le \sum_{j=1}^L a_{ji}\opnorm{\left(\bvec{e}_j^{k+1}\right)_{\Lambda_i^{k+1}}}_2 + \sum_{j=1}^L a_{ji}\mu_j \opnorm{(\nabla f_j(\bvec{x}^\star))_{\Lambda_{i}^{k+1}}}_2\nonumber\\
\label{eq:k+1-step-first-inequality}
\ & \le \sum_{j=1}^L a_{ji} \left(\omega_j \opnorm{\bvec{x}^{k}_j-\bvec{x}^\star}_2 +\mu_j\opnorm{\nabla_{K} f_j(\bvec{x}^\star)}_2 \right)\nonumber\\
\ & \le \sum_{j=1}^L a_{ji} \left(\omega_j \opnorm{\bvec{x}^{k}_j-\bvec{x}^\star}_2 +\mu_jb_j \right)
\end{align}
where the last step used the fact that $\opnorm{\nabla_{K} f_j(\bvec{x}^\star)}_2 \le \opnorm{\nabla_{2K} f_j(\bvec{x}^\star)}_2=b_j $ and inequality~\eqref{eq:rpdh-inequality-result2} of Lemma~\ref{lem:RPDH-inprod-inequality} which used the fact that $\abs{\Lambda_i^k\cup \Lambda\cup \Lambda_i^{k+1}}\le 3K$.
%\begin{align}
%\bvec{x}^{k+1}_i & = H_K\left(\sum_{j=1}^L a_{ji}\left(\bvec{x}_j^k - \mu_j\nabla f_j(\bvec{x}_j^k)\right)\right).
%\end{align} 
%%Let us define, $\bvec{A}=\bvec{A}_1\bvec{A}_2$, and define, $b_{ijk}=a_{1,kj}a_{2,ji},\ i,j,k\in \{1,2,\cdots,\ L\}$.
%Then, from the definition of the operator $H_K$, we obtain, $\forall\ i=1,2,\cdots,\ L$,
On the other hand, since the support of $\bvec{x}_i^{k+1}$ is $\Lambda_i^{k+1}$, one obtains,
\begin{align}
\opnorm{\left(\sum_{j=1}^L a_{ji}\bvec{\psi}_j^{k+1}\right)_{\Lambda_i^{k+1}}}_2 & \ge \opnorm{\left(\sum_{j=1}^L a_{ji}\bvec{\psi}_j^{k+1}\right)_{\Lambda}}_2\nonumber\\
\label{eq:comparison-inequality-preliminary}
\Leftrightarrow \opnorm{\left(\sum_{j=1}^L a_{ji}\bvec{\psi}_j^{k+1}\right)_{\Lambda_i^{k+1}\setminus \Lambda}}_2 & \ge \opnorm{\left(\sum_{j=1}^L a_{ji}\bvec{\psi}_j^{k+1}\right)_{\Lambda\setminus \Lambda1^{k+1}_i}}_2.
\end{align}
 Again, using the expression of $\bvec{\psi}_j^{k+1}$, it is easy to observe that \begin{align}
\lefteqn{\opnorm{\left(\sum_{j=1}^L a_{ji}\bvec{\psi}_j^{k+1}\right)_{\Lambda_i^{k+1}\setminus \Lambda}}_2} & & \nonumber\\
\ & \le \sum_{j=1}^L a_{ji}\opnorm{\left(\bvec{e}_j^k\right)_{\Lambda_i^{k+1}\setminus \Lambda}}_2 + \sum_{j=1}^L a_{ji}\mu_j\opnorm{(\nabla f_j(\bvec{x}^\star))_{\Lambda_i^{k+1}\setminus \Lambda}},
\end{align}
where the last step used triangle inequality and the fact that $(\bvec{x}^\star)_{\Lambda_i^{k+1}\setminus \Lambda}=\bvec{0}$. Similarly, one can find,  \begin{align}
\lefteqn{\opnorm{\left(\sum_{j=1}^L a_{ji}\bvec{\psi}_j^{k+1}\right)_{\Lambda\setminus \Lambda^{k+1}_i}}_2} & &\nonumber\\
\ & \ge \opnorm{(\bvec{x}_i^{k+1}-\bvec{x}^\star)_{\Lambda\setminus \Lambda_i^{k+1}}}_2 - \sum_{j=1}^L a_{ji}\opnorm{(\bvec{e}^{k}_{j})_{\Lambda\setminus \Lambda^{k+1}_i}}_2\nonumber\\
\ & - \sum_{j=1}^L a_{ji}\mu_j\opnorm{(\nabla f_j(\bvec{x}^\star))_{\Lambda\setminus \Lambda_i^{k+1}}}_2.
\end{align}
Therefore, from inequality~\eqref{eq:comparison-inequality-preliminary} it follows that \begin{align}
\lefteqn{\opnorm{(\bvec{x}_i^{k+1}-\bvec{x}^\star)_{\Lambda\setminus \Lambda_i^{k+1}}}_2} & &\nonumber\\
\ & \le \sum_{j=1}^L a_{ji}\left(\opnorm{(\bvec{e}^{k}_{j})_{\Lambda\setminus \Lambda^{k+1}_i}}_2 + \opnorm{(\bvec{e}^{k}_{j})_{\Lambda^{k+1}_i\setminus \Lambda}}_2\right)\nonumber\\
\ & + \sum_{j=1}^L a_{ji}\mu_j\left(\opnorm{(\nabla f_j(\bvec{x}^\star))_{\Lambda\setminus \Lambda_i^{k+1}}}_2+\opnorm{(\nabla f_j(\bvec{x}^\star))_{\Lambda_i^{k+1}\setminus\Lambda }}_2\right)\nonumber\\
\ & \le \sqrt{2}\sum_{j=1}^L a_{ji}\left(\opnorm{(\bvec{e}_j^{k})_{\Lambda_{i}^{k+1}\Delta \Lambda}}_2+\mu_j\opnorm{(\nabla f_j(\bvec{x}^\star))_{\Lambda_i^{k+1}\Delta\Lambda}}_2\right)\nonumber\\
\ & \le \sqrt{2}\sum_{j=1}^L a_{ji}\left(\omega_j\opnorm{\bvec{x}^{k}_j-\bvec{x}^\star}_2 + \mu_j\opnorm{\nabla_{2K}  f_j(\bvec{x}^\star)}_2\right)\nonumber\\
\ & = \sqrt{2}\sum_{j=1}^L a_{ji}\left(\omega_j\opnorm{\bvec{x}^{k}_j-\bvec{x}^\star}_2 + \mu_jb_j\right)
\end{align}
where the last step used inequality~\eqref{eq:rpdh-inequality-result2} of Lemma~\ref{lem:RPDH-inprod-inequality}. Therefore, it follows that for all $i=1,\cdots,\ L$ \begin{align}
\label{eq:difight-one-node-main-inequality}
\opnorm{\bvec{x}_i^{k+1}-\bvec{x}^\star}_2 & \le \sqrt{3}\sum_{j=1}^L a_{ji}\left(\omega_j\opnorm{\bvec{x}^{k}_j-\bvec{x}^\star}_2 + \mu_jb_j\right).
\end{align}
Using the definition, $\bvec{h}^k=\begin{bmatrix}
\opnorm{\bvec{x}_1^k - \bvec{x}^\star}_2 & \opnorm{\bvec{x}_2^k - \bvec{x}^\star}_2 & \cdots & \opnorm{\bvec{x}_L^k - \bvec{x}^\star}_2
\end{bmatrix}^t $. Then, it is easy to observe that the preceding set of inequalities can be collected together into the following vector inequality: \begin{align}
\label{eq:difight-main-inequality}
\bvec{h}^{k+1} & \preccurlyeq \bvec{H} \bvec{h}^k + \bvec{d},
\end{align} 
where $\bvec{H} = \sqrt{3}\bvec{A}^t\bvec{\Omega}$, and $\bvec{d} = \sqrt{3}\bvec{A}^t\bvec{M}\bvec{b}$, where $\bvec{M,\ b}$ are defined in Theorem~\ref{thm:convergence-DiFIGHT}.
Now, using lemma~\ref{lem:irreducibility}, we find that $\bvec{H}$ is an irreducible matrix. Then according to the Perron-Frobenius theorem~\ref{thm:perron-frobenius}, the maximum eigenvalue (according to absolute value) of $\bvec{H}$ satisfies $r\le\sqrt{3}\max_i\sum_{j=1}^L \omega_{j} a_{ji}$. By imposing the restriction $\max_i\sum_{j=1}^L \omega_{j} a_{ji}<1/\sqrt{3}$, we see from (4) of Theorem~\ref{thm:perron-frobenius} that the matrix $\bvec{H}$ is stable, and consequently, applying Lemma~\ref{lem:convergence-linear-inequality}, one finds that $\bvec{h}\bvec{\le}(\bvec{I}-\bvec{H})^{-1}\bvec{d}$, where $\bvec{h}$ is any limit point of the sequence $\{\bvec{h}^k\}_{k\ge 0}$. Furthermore, $\max_{j}\omega_{j}<1/\sqrt{3}$ ensures that $\max_i\sum_{j=1}^L \omega_{j} a_{ji}<1/\sqrt{3}$, which is a weaker sufficient condition for the stability of matrix $\bvec{H}$, that does not require the explicit knowledge of the combination matrix $\bvec{A}$.

In order to derive an evolution inequality for the MoDiFIGHT algorithm, we first note that since $\bvec{\psi}^k_j = H_K\left(\bvec{x}_j^k - \mu_j\nabla f_j(\bvec{x}_j^k)\right),\ j=1,2,\cdots,\ L$, one can readily use the analysis of Theorem 3.5 of~\cite{foucart2011hard}, as above, restricted to a single node to derive the following inequality for all nodes $j=1,\cdots,\ L$: \begin{align}
\label{eq:modifight-inequality-1}
\opnorm{\bvec{\psi}_j^k-\bvec{x}^\star}_2 & \le \sqrt{3}\omega_j\opnorm{\bvec{x}_j^k - \bvec{x}^\star}_2 + \sqrt{3}b_j.
\end{align} 

On the other hand, since $\bvec{x}_i^{k+1}=H_K\left(\sum_{j=1}^L a_{ji}\bvec{\psi}_j^k\right),$ using the analysis similar to the one for DiFIGHT for one node,  we can find that \begin{align}
\opnorm{\left(\bvec{x}_i^{k+1} - \bvec{x}^\star\right)_{\Lambda_i^{k+1}}}_2 & \le  \sum_{j=1}^L a_{ji}\opnorm{\left(\bvec{\psi}_j^{k} - \bvec{x}^\star\right)_{\Lambda_i^{k+1}}}_2,\nonumber\\
\opnorm{\left(\bvec{x}_i^{k+1} - \bvec{x}^\star\right)_{\Lambda\setminus \Lambda_i^{k+1}}}_2 & \le \sqrt{2}\sum_{j=1}^L a_{ji}\opnorm{\left(\bvec{\psi}_j^{k}-\bvec{x}^\star\right)_{\Lambda_i^{k+1}\Delta \Lambda} }\nonumber\\
\label{eq:modifight-inequality-2}
\implies \opnorm{\bvec{x}_i^{k+1} - \bvec{x}^\star}_2 & \le \sqrt{3}\sum_{j=1}^L a_{ji}\opnorm{\bvec{\psi}_j^{k}-\bvec{x}^\star }.
\end{align}
 Therefore the inequalities~\eqref{eq:modifight-inequality-1} and~\eqref{eq:modifight-inequality-2} together yield the following 
main inequality governing the evolution of $\opnorm{\bvec{x}_i^{k+1} - \bvec{x}^\star}_2$ for node $i$ for MoDiFIGHT: 
\begin{align}
\lefteqn{\opnorm{\bvec{x}^{k+1}_i - \bvec{x}^\star}_2} & &\nonumber\\
\label{eq:modifight-one-node-main-inequality}
\ & \le 3\sum_{j=1}^L a_{ji}\omega_j\opnorm{\bvec{x}^k_j - \bvec{x}^\star}_2 + 3\sum_{j=1}^L a_{ji} \mu_j\opnorm{\nabla_{2K} f_j(\bvec{x}^\star)}_2
\end{align}
We note that the inequality~\eqref{eq:modifight-one-node-main-inequality} is essentially the same as the inequality~\eqref{eq:difight-one-node-main-inequality} only with a factor of $3$ instead of $\sqrt{3}$ at the front. Thus, using similar analysis as present in the part of the analysis of DiFIGHT after inequality~\eqref{eq:difight-one-node-main-inequality}, we arrive at the following vector inequality: 
\begin{align}
\label{eq:modifight-main-inequality}
\bvec{h}^{k+1} & \preccurlyeq \sqrt{3}\bvec{H} \bvec{h}^k + 2\bvec{d},
\end{align}
where $\bvec{H},\ \bvec{d}$ are defined as in the analysis of the DiFIGHT algorithm. Also, we find that a sufficient condition for the right hand side of the above inequality to converge is $\max_i\sum_{j=1}^L a_{ji}\omega_j<1/3$, or a weaker condition $\max_j \omega_j <1/3$.  
\section{Proof of Theorem~\ref{thm:convergence-randomized-diffusion-algorithms} }
\label{sec:appendix-randomized-algorithms-analysis}
To carry out the proof, let us first consider a node $v\in G_k$, where $G_k$ is the group of nodes chosen at time $k$. For both the randomized DiFIGHT and MoDiFIGHT algorithms, derivation of the evolution of the norm of the error $\opnorm{\bvec{x}_v^{k+1}-\bvec{x}^\star}_2$ in terms of $\opnorm{\bvec{x}_v^{k}-\bvec{x}^\star}_2$ will be identical to that of their deterministic counterparts, that is either the inequality~\eqref{eq:difight-one-node-main-inequality} or~\eqref{eq:modifight-one-node-main-inequality}. Therefore, for any $v\in G_k$, one obtains, \begin{align}
\label{eq:randomized-one-participating-node-inequality}
h_v^{k+1} & \le \alpha \sum_{j=1}^L a_{jv}\left(\omega_jh_j^k+\mu_jb_j\right),
\end{align}
where $\alpha=\sqrt{3}$ or $\alpha=3$, depending on whether DiFIGHT or MoDiFIGHT is used. However, the nodes not in $G$ do not update their estimate, so that for $v\notin G_k$, one has \begin{align}
\label{eq:randomized-one-non-participating-node-inequality}
h_{v}^{k+1} & = h_v^k.
\end{align} Taking the inequalities~\eqref{eq:randomized-one-participating-node-inequality} and~\eqref{eq:randomized-one-non-participating-node-inequality} together, the following evolution inequality is obtained for the vector $\bvec{h}^k$:  \begin{align}
\label{eq:error-evolution-randomized-difight-modifight}
\bvec{h}^{k+1}\preccurlyeq \bvec{B}_k\bvec{h}^k+\bvec{c}_k,
\end{align}
%\item For randomized MoDiFIGHT: \begin{align}
%\label{eq:error-evolution-randomized-modifight}
%\bvec{h}^{n+1}\bvec{\le} \bvec{D}_1\bvec{A}_n^t\left(\bvec{\Omega}\bvec{h}^n+\bvec{M b}_1\right)+\bvec{D}_2\bvec{M b}_2
%\end{align}
%\end{enumerate} 
%where $\alpha = 2$ for DiFIGHT, and $\alpha = 4$ for MoDiFIGHT.
where the vector $\bvec{c}_k$ and the matrix $\bvec{A}_k$, where $\bvec{B}_k=\alpha \bvec{A}_k^t\Omega$ with $\bvec{A}_k=[\bvec{a}_{k,1}\cdots \bvec{a}_{k,L}]$, are determined as below:
\begin{align}
\label{eq:b_k-description}
c_{k,v} & = \left\{
\begin{array}{ll}
\alpha \bvec{a}_v^t\bvec{Mb}, & v\in G\\
0, & v\notin G
\end{array}
\right.\\
\label{eq:A_k-description}
%\begin{aligned}
\bvec{a}_{k,v} & = \left\{\begin{array}{ll}
\bvec{a}_v, & v\in G\\
\frac{\bvec{e}_v}{\alpha \omega_v}, & v\notin G
\end{array}
\right.
%\end{aligned}
\end{align}
where $\bvec{e}_v$ is the the column vector with all entries set to $0$ except for the $v^\mathrm{th}$ entry which is set to $1$.
%\begin{align*}
%\bvec{A}_n = \begin{bmatrix}
%1 & 0 & \cdots & a_{1v} & 0 & \cdots & 0\\
%0 & 1 & \cdots & a_{2v} & 0 & \cdots & 0\\
%\vdots & \vdots & \cdots & \vdots & \vdots & \cdots & \vdots\\
%0 & 0 & \cdots & a_{Lv} & 0 & \cdots & 1\\
%\end{bmatrix},
%\end{align*}
%
%Since the sequence $\bvec{h}^k$ is a random sequence (specifically a Markov Chain), we analyze the evolution of $\expect{\bvec{h}^k}$, where the expectation is taken with respect to the randomness associated to choosing nodes participating in the diffusion process. 
%
%Before further proceeding, we represent the inequality~\eqref{eq:error-evolution-randomized-difight-modifight} in the following compact form: \begin{align}
%\label{eq:error-evolution-compact-representation}
%\bvec{w}^{k+1}\preccurlyeq\bvec{B}_k\bvec{w}^k+\bvec{c}_k
%\end{align}
%where $\bvec{w}^k = \bvec{h}^k + \bvec{\Omega}^{-1}\bvec{M b}_k$, and $\bvec{B}_k = \alpha\bvec{A}_k^t\bvec{\Omega},\ \bvec{c}_k = \bvec{\Omega}^{-1}\bvec{M b}_k$.
%
%\begin{align}
%\label{eq:compact-representation-Bn}
%\bvec{B}_n & = \alpha\bvec{A}_n^t\bvec{\Omega},\\
%\label{eq:compact-representation-c}
%\bvec{c} & = \bvec{\Omega}^{-1}\bvec{M b}.
%\end{align}
We will now use the compact inequality~\eqref{eq:error-evolution-randomized-difight-modifight} to derive condition for stability of the mean of the sequence $\{\bvec{h}^k\}$.

Taking expectation of both sides of the inequality~\eqref{eq:error-evolution-randomized-difight-modifight} we find \begin{align}
\label{eq:error-evoltuion-mean-sequence-compact-form}
\expect{\bvec{h}^{k+1}} & \preccurlyeq\bvec{B}\expect{\bvec{h}^k} + \bvec{c}\\
\label{eq:error-evolution-mean-sequence-compact-form-convergence-representation}
\implies \expect{\bvec{h}^{k+1}} & \preccurlyeq\bvec{B}^{k+1}\expect{\bvec{h}^0} + \sum_{j=0}^k \bvec{B}^j \bvec{c}
\end{align}
where $\bvec{B}=\expect{\bvec{B}_k}$ and $\bvec{c}=\expect{\bvec{c}_k},\ k\ge 0$. It follows that the right hand side of the inequality~\eqref{eq:error-evolution-mean-sequence-compact-form-convergence-representation} converges if the matrix $\bvec{B}$ is stable. The matrix $\bvec{B}$ and vector $\bvec{c}$ have different forms for different strategies and for the different algorithms. We find them as below: 

Let us first find $\expect{\bvec{b}_{k,v}}$, where $\bvec{b}_{k,v}$ is the $v^\mathrm{th}$ column of the matrix $\bvec{B}_k$. Since $\expect{\bvec{B}_k} = \alpha \expect{\bvec{A}_k^t}\bvec{\Omega}$, we only require to find the expected value of $\bvec{a}_{k,v}$. Note that the column $\bvec{a}_{k,v}$ can take only two vector values, $\bvec{a}_v$ and $\bvec{e}_v/(\alpha\omega_v)$, depending on whether the node $v$ participates or not in the diffusion process at the $k^{\mathrm{th}}$ time step. Therefore, \begin{align}
\expect{\bvec{a}_{k,v}} & = \pi_v\bvec{a}_{v}+(1-\pi_v)\frac{\bvec{e}_v}{\alpha \omega_v},\ v=1,\cdots,L\nonumber\\
\implies \expect{\bvec{A}_k} & = \bvec{AP} + \frac{(\bvec{I} - \bvec{P})\bvec{\Omega}^{-1}}{\alpha}\nonumber\\
\label{eq:expectation-B-expression}
\implies \bvec{B} & =\bvec{I} - \bvec{P} + \alpha \bvec{PA}^t\bvec{\Omega}
\end{align} where $\bvec{P}=\diag{\pi_1,\cdots\, \pi_L}$.
Note that the diagonal matrix $\bvec{P}$ varies with different strategy and have diagonal entries $\pi_v$ that can be found from Table~\ref{tab:communication-complexities}.

In a similar manner one can find: \begin{align}
\label{eq:expectation-c-expression}
\expect{c_{k,v}} = \alpha \bvec{a}_v^t\bvec{Mb}\pi_v &
\implies \expect{\bvec{c}} = \alpha \bvec{P}\bvec{A}^t\bvec{Mb}.
\end{align}
%Where $\pi_v$ is the probability that, at a given time step, the node $v$ falls in the group of participating nodes. Thus $\pi_v$ depends on the randomization strategy used. We find below the values of $\pi_v$ for the different strategies adopted. 
%

Now, observe that as the network is connected, the matrix $\bvec{A}$ is irreducible. Also, $\bvec{P}$ and $\bvec{\Omega,\ }$ are non-negative diagonal matrices. Thus, using Lemma~\ref{lem:irreducibility}, and using the Perron-Frobenius theory, we can conclude that the matrix $\bvec{B}$ can be ensured to be Schur stable if the functions $\{f_i\}_{1\le i\le L}$ are chosen such that \begin{align}
\max_{i}\left(1-\pi_i +\alpha \pi_i\sum_{j=1}^L a_{ji}\omega_j\right) <1 & 
\Leftrightarrow \max_{i} \sum_{j=1}^L a_{ji}\omega_j <\frac{1}{\alpha}.
\end{align}  Note that, a weaker requirement is to choose the functions $\{f_i\}_{1=1}^L$ such that $(\max_i\omega_i)<1/\alpha$, as that implies $\sum_{j=1}^L a_{ji}\omega_j\le (\max_j\omega_j)\sum_{j=1}^L a_{ji}<1/\alpha$. Then, using Lemma~\ref{lem:convergence-linear-inequality}, from Equation~\eqref{eq:error-evolution-mean-sequence-compact-form-convergence-representation}, we have  \begin{align}
\bvec{h}\preccurlyeq(\bvec{I}-\bvec{B})^{-1} \bvec{c},
\end{align} where $\bvec{h}$ is any limit point of the sequence $\{\expect{\bvec{h}^k}\}_{k\ge 0}$. Now, using the expression of $\bvec{B},\bvec{c}$ from Eqs.~\eqref{eq:expectation-B-expression} and~\eqref{eq:expectation-c-expression}, respectively, one obtains, for any limit point $\bvec{h}$ of the sequence $\{\bvec{h}^k\}_{k\ge 0}$, \begin{align}
\bvec{h}\preccurlyeq \alpha(\bvec{I}-\alpha \bvec{A}^t\bvec{\Omega})^{-1}\bvec{A}^t\bvec{Mb}.
\end{align}

Now, when $\bvec{x}^\star$ is a stationary point of the functions $f_i,\ i=1,\cdots, L$, we have $\nabla f_i(\bvec{x}^\star) = \bvec{0},\ \forall i=1,\cdots,L$. Then, as per the definition of $\bvec{b}$ at the beginning of~\ref{sec:notation-main-results}, $\bvec{b} = \bvec{0}$, and thus $\bvec{c} = \bvec{0}$. Then, the evolution inequality~\eqref{eq:error-evolution-mean-sequence-compact-form-convergence-representation} reduces to $\expect{\bvec{h}^{k}} \preccurlyeq \bvec{B}^k \expect{\bvec{h}^0},\ k\ge 0$. Now, let us choose any coordinate $i$, $i=1\cdots, L$. To prove the almost sure convergence of $\{h_i^k\}_{k\ge 0}$, we use the Lemma~\ref{lem:almost-sure-convergence-grimmett}. Choose any $\epsilon>0$. Then we need to show that $\sum_{k\ge 0}\mathbb{P}(h_i^k>\epsilon)<\infty$ for any $\epsilon>0$. Now, for any $i=1,\cdots,L$, $h^k_i = \inprod{\bvec{e}_i}{\bvec{h}^k}$, where $\bvec{e}_i = [0\ 0\cdots 1\cdots 0\ 0]^t$, with the $1$ at the $i^\mathrm{th}$ coordinate. Consequently, 
\begin{align}
\label{eq:h-coordinate-inequality-almost-sure-convergence}
\expect{h_i^k}\le \inprod{\bvec{e}_i}{\bvec{B}^k \expect{\bvec{h}^0}}.
\end{align} 
Then, using Markov's inequality, followed by~\eqref{eq:h-coordinate-inequality-almost-sure-convergence}, we obtain $\sum_{k\ge 0}\mathbb{P}(h_i^k>\epsilon)\le \sum_{k\ge 0} \frac{\expect{h_i^k}}{\epsilon}\le \frac{\sum_{k\ge 0}\inprod{\bvec{e}_i}{\bvec{B}^k \expect{\bvec{h}^0}}}{\epsilon}$. Now, as $\bvec{B}$ is stable, the Neumann series $\sum_{k\ge 0}\bvec{B}^k$ converges to $(\bvec{I} - \bvec{B})^{-1}$, which allows, by Fubini's theorem to write $\sum_{k\ge 0}\inprod{\bvec{e}_i}{\bvec{B}^k \expect{\bvec{h}^0}} = \inprod{\bvec{e}_i}{\sum_{k=0}^\infty \bvec{B}^k \bvec{h}^0} = \inprod{\bvec{e}_i}{(\bvec{I} - \bvec{B})^{-1} \bvec{h}^0}$. Hence, $\sum_{k\ge 0}\mathbb{P}(h_i^k>\epsilon)\le \frac{\inprod{\bvec{e}_i}{(\bvec{I} - \bvec{B})^{-1} \bvec{h}^0}}{\epsilon}\le \frac{\opnorm{(\bvec{I}-\bvec{B})^{-1}\bvec{h}^0}}{\epsilon}<\infty$, for any $\epsilon>0$. Thus, by Lemma~\ref{lem:almost-sure-convergence-grimmett}, $h_i^k \to 0\ (a.s.)\implies \opnorm{\bvec{x}_i^k - \bvec{x}^\star}_2\to 0\ (a.s.)\implies \bvec{x}_i^k\to\bvec{x}^\star$ a.s.
\bibliography{diffusion-iht}
\end{document}